\renewcommand{\matrix}[1]{\begin{pmatrix}#1\end{pmatrix}}
\newcommand{\Real}{\mathbb{R}}
\newcommand{\e}[1]{\cdot 10^{#1}}
\renewcommand{\det}{\operatorname{det}}
\newtheorem{definition}{\textbf{Definition}}
\newtheorem{proposition}{\textbf{Proposition}}
\newtheorem{lemma}{\textbf{Lemma}}
\newtheorem{theorem}{\textbf{Theorem}}
\title{Searching bifurcations in high-dimensional parameter space via a feedback loop breaking approach}
\author{Steffen Waldherr$^{\ast}$\thanks{$^\ast$Corresponding author. Email: waldherr@ist.uni-stuttgart.de} 
and Frank Allg{\"o}wer\\
Institute for Systems Theory and Automatic Control,\\
Universit{\"a}t Stuttgart, Stuttgart, Germany}
\begin{document}
% \vspace*{-2cm}
% \newsavebox{\preprintnote}
% \sbox{\preprintnote}{Preprint version, for review only, \today.}
% \begin{center}
% \usebox{\preprintnote}
% \end{center}
% \vspace*{-\ht\preprintnote}
% \vspace*{1.5cm}
\maketitle
\begin{abstract}
Bifurcations leading to complex dynamical behaviour of non-linear systems
are often encountered when the characteristics of feedback circuits in the system are varied.
In systems with many unknown or varying parameters, it is an interesting, but difficult
problem to find parameter values for which specific bifurcations occur.
In this paper, we develop a loop breaking approach to evaluate the influence of
parameter values on feedback circuit characteristics. 
This approach allows a theoretical classification of feedback circuit characteristics
related to possible bifurcations in the system.
Based on the theoretical results, a numerical algorithm for bifurcation search in a
possibly high-dimensional parameter space is developed.
The application of the proposed algorithm is illustrated by searching for a Hopf bifurcation in a model of
the mitogen activated protein kinase (MAPK) cascade,
which is a classical example for biochemical signal transduction.
\end{abstract}

\section{Introduction}

A frequent challenge in the analysis of non-linear dynamical systems is to find parameter
values for which the system undergoes changes in its dynamical behaviour.
Such changes are directly related to the emergence of complex dynamical behaviour.
Standard cases of complex dynamical behaviour are multistability, i.e.\ the
existence of several stable steady states, limit cycle oscillations, and non-periodic oscillations.

Feedback circuits are the major structural feature in the emergence of 
complex dynamical behaviour. In particular, it can be shown that a positive feedback
circuit in the system is required for multistationarity \citep{KaufmanSou2007},
whereas a negative circuit is typically required for limit cycle
oscillations \citep{Snoussi1998}.
This importance of feedback circuits makes control theory a natural tool
for the analysis of complex dynamical behaviour.

Yet, the main properties of a system's qualitative dynamical behaviour are the location
and stability of equilibrium points. 
Knowledge of these is often also useful when analysing complex dynamical behaviour.
It is well known from dynamical systems theory that two
stable equilibrium points are separated by an invariant repellor,
which contains an unstable equilibrium point in most cases. 
Similarly, stable limit cycle oscillations usually coexist with an unstable equilibrium point.
Also transient behaviour is often governed by the attraction to and repulsion from equilibrium points.
Thus a convenient first step when studying the qualitative behaviour of a dynamical system is
to look at stability properties of equilibrium points.

A classical tool for analysing the influence of parameter values on the location 
and stability of equilibrium points is bifurcation analysis.
Bifurcation analysis is done routinely with numerical continuation methods 
for one adjustable bifurcation parameter \citep{Kuznetsov1995}.
Methods for numerical bifurcation analysis in several parameters are now being
developed \citep{Henderson2007,StiefsGro2007}, but due to practical considerations,
they remain limited to two or three adjustable bifurcation parameters.

The challenge to find parameter values for bifurcations is of particular relevance in the
area of biological systems.
The main reasons for this are that biological function is often based on complex
dynamical behaviour, and that parameters can vary within a large range due to
environmental or internal conditions.

There are many
examples where complex dynamical behaviour of a non-linear biological system can directly
be related to biological function.
Some examples from the specific area of biochemical signal transduction within
living cells are given by bistability in the mitogen activated protein kinase (MAPK) pathway 
to induce developmental processes \citep{FerrellXio2001}, 
rapid activation of caspases upon an over-threshold stimulus in programmed cell
death \citep{EissingCon2004}, and sustained oscillations in circadian clocks \citep{LeloupGol2003}.
% additional example: \cite{Pujo-MenjouetBer2005}

Systems for biochemical signal transduction are usually modelled with non-linear
ordinary differential equations (ODEs).
Many models of biochemical systems contain a high number of model parameters,
usually even more parameters than state variables.
A major problem in understanding biochemical systems is that most of these
parameters are not very well known from measurements, and that they often vary 
significantly due to internal or environmental conditions of the cell.
Thus analysing the influence of uncertain or varying parameters on stability
properties is a fundamental issue towards understanding dynamical behaviour of biochemical systems.
Moreover, to avoid overlooking relevant effects it is necessary to 
consider simultaneous changes in all adjustable parameters \citep{StellingSau2004,KimPos2006}.

The requirement of looking at simultaneous changes in several parameters makes the application of
classical continuation methods problematic, as these require to define a line in parameter space along
which equilibrium points are tracked.
A good choice of this line is essential to obtain meaningful results, yet this choice
is often done by intuitive understanding of the system in the better case or iterative trials
in the worse.
Often only a single parameter is varied at a time, but then again
the choice of the parameter to vary is not trivial and needs to be done
for example via sensitivity considerations.

In this paper, we present a new method to locate points in a possibly high-dimensional parameter
space for a change in stability properties of equilibrium points, often hinting to either emergence 
or loss of complex dynamical behaviour. 
The method is based on considering the dynamical system as a closed loop feedback system. 
It is then possible to study properties of the original system in terms of 
an adequately defined open loop system.
If the open loop system is well chosen, then its dynamical behaviour is much simpler than that
of the closed loop system.
This simplification makes it possible to come to conclusions that could not
be obtained from the closed loop system alone.
In particular, we show how to classify parameter values where the closed loop system can
undergo local bifurcations of equilibrium points, based on an analysis of the open loop system.
The obtained conditions are used to develop a numerical method for searching
parameter values that lead to a change in stability properties.
In theory, this can be done for parameter spaces of arbitrary dimension, as neither the conditions nor the
algorithm we use depend on the dimension of the parameter space.
We consider only codimension one bifurcations, as they are the case that is generically
encountered in non-linear systems.

We make use of the fact that for stability considerations, it is 
sufficient to look at a linear approximation of the system close to the
equilibrium. The linearised system is transformed to the frequency domain for our
analysis.
The use of frequency domain methods for bifurcation analysis has already been introduced
by \citet[cited from \cite{MeesChu1979}]{Allwright1977}, and relevant results have also been
presented by Moiola and coworkers over the last decade \citep{MoiolaDes1991,MoiolaCol1997}.

Several authors have also studied the problem of finding bifurcations in systems
with many parameters using geometric tools.
Based on a description of vectors normal to a bifurcation manifold \citep{MonnigmannMar2002},
a method to search for locally closest bifurcations from a given reference
point was developed by \cite{Dobson2003}. 
These approaches can be seen as complementary to our results.
A recent application of the geometric concept to biological systems has been discussed by \cite{LuEng2006}.

% relevant literature:
% 
% bifurcation analysis in frequency domain
% \cite{MoiolaDes1991}\cite{MoiolaCol1997}
% 
% bifurcation search in high-dimensional parameter space
% \cite{LuEng2006}\cite{Dobson2003}
% particular approach with monotone systems: \cite{AngeliFer2004}
% 
% concerning the possible stationary points in parameter space
% \cite{KuepferSau2007}
% 
% sensitivity analysis: \cite{IhekwabaBro2004}
% 
% frequency domain approach to sensitivity \cite{Ingalls2004}

Our paper is structured as follows. In Section~\ref{sec:methods}, we introduce the
loop breaking approach and provide the general tools which are necessary for our method.
The main results are presented in Section~\ref{sec:main-results}: a frequency domain theorem
on topological equivalence, an existence theorem for critical parameters and a
numerical algorithm to search for parameters yielding a change in dynamical behaviour.
Moreover, we shortly discuss the benefits of our approach compared to a straightforward
extension of classical tools.
As an application example, the method is used in Section~\ref{sec:application} to search for possible limit cycle
oscillations in an ODE model of a biochemical signal transduction system.

\section{The loop breaking concept}
\label{sec:methods}

\subsection{Problem setup}

Consider a parameter-dependent nonlinear differential equation given by
\begin{equation}
\label{eq:closed-loop-system}
\begin{aligned}
\dot x = F(x,p),
\end{aligned}
\end{equation}
with $x\in\Real^n$, $p\in\mathcal{P}\subset\Real^m$ and $F:\Real^n\times\mathcal{P}\rightarrow\Real^n$ a smooth vector field.

The system \eqref{eq:closed-loop-system} is studied locally at an equilibrium point. 
In what follows we frequently denote 
\begin{equation}
\begin{aligned}
\xi = (\bar x, p) \in \Real^n \times \mathcal P,
\end{aligned}
\end{equation}
where $\bar x$ is an equilibrium point and $p$ a corresponding parameter. 
We call $\xi$ an equilibrium--parameter pair of the system \eqref{eq:closed-loop-system} in the sense that
$\bar x$ is an equilibrium for the parameter $p$.
Let $\mathcal M$ be a
smooth connected $m$-dimensional manifold of equilibrium--parameter pairs
in $\Real^n\times\mathcal{P}$, i.e.
\begin{equation}
\label{eq:ep-manifold}
\begin{aligned}
\forall \xi \in \mathcal{M}: F(\bar x, p) =0.
\end{aligned}
\end{equation}
In the simplest case, there is a unique equilibrium point for each $p\in\mathcal{P}$, and one could use
a function $\bar x(p)$ to characterise the manifold of equilibrium--parameter pairs more easily. However, the
approach taken here is more general and also allows to consider
e.g. saddle-node bifurcations, where existence
of a unique equilibrium for each $p\in\mathcal{P}$ is not given.
For most applications, $\mathcal M$ can just be considered to be defined by the
equilibrium point equation
\begin{equation*}
\begin{aligned}
F(x,p) = 0.
\end{aligned}
\end{equation*}
In some cases it may however be beneficial to reduce $\mathcal M$ using
analytical tools before the analysis presented in this paper, in order
to satisfy technical assumptions or to improve the numerics.

\subsection{Loop breaking and closed loop eigenvalues}

Mathematically, the system \eqref{eq:closed-loop-system} is said to contain
a feedback loop if the influence graph of its Jacobian $\frac{\partial F}{\partial x}$
contains a nontrivial loop \citep{CinquinDem2002a}. 
Let us now assume that \eqref{eq:closed-loop-system} contains a feedback loop.
This assumption is not restrictive, because without a feedback loop, the analytical
expressions for the eigenvalues in terms of parameters and the state variables
can be taken directly from the diagonal of the (possibly permuted) Jacobian $\frac{\partial F}{\partial x}$.
In this case, it is usually easy to find parameter values for a change in stability
properties of the equilibrium points.

In the feedback loop approach, an input--output system which corresponds to the
original system is obtained by breaking the feedback loop.
As seen from the following definition, the original system
can be recovered by closing the feedback loop again.

\begin{definition}
\label{def:loop-breaking}
A \emph{loop breaking} for the system \eqref{eq:closed-loop-system} is a pair $(f,h)$, where
$f:\Real^n\times\Real\times\mathcal{P}\rightarrow\Real^n$ is a smooth vector field and $h:\Real^n\rightarrow\Real$
is a smooth function, such that
\begin{equation}
\label{eq:loop-breaking}
\begin{aligned}
F(x,p) = f(x,h(x),p).
\end{aligned}
\end{equation}
\end{definition}

The corresponding open loop system is then given by the equation 
\begin{equation}
\label{eq:open-loop-system}
\begin{aligned}
\dot x &= f(x,u,p) \\
y &= h(x),
\end{aligned}
\end{equation}
and the closed loop system \eqref{eq:closed-loop-system} is recovered by letting $u=y$.
Note that there is a direct relation between equilibrium points in the closed and the open loop
system: for an equilibrium--parameter pair $(\bar x,p)$ of the closed loop
system \eqref{eq:closed-loop-system},
setting the input $u=h(\bar x)$ in the open loop system \eqref{eq:open-loop-system} leads to $(\bar x, p)$ being
an equilibrium--parameter pair of \eqref{eq:open-loop-system}. We denote $\bar{u} = h(\bar{x})$.

To deal with the question whether different equilibrium-parameter pairs in $\mathcal{M}$ can have
different stability properties, it is reasonable to consider the linear approximation
of the system \eqref{eq:closed-loop-system} close to some pair $\xi\in\mathcal{M}$.
Only the pairs where the Jacobian $\frac{\partial F}{\partial x}(\xi)$ has eigenvalues
on the imaginary axis are candidate points for local bifurcations.
Any such pair $\xi$ is called a \emph{critical point}, and is denoted as $\xi_c$.

The linear approximation for the open loop system \eqref{eq:open-loop-system}
in the neighbourhood of the equilibrium--parameter pair $\xi\in\mathcal{M}$ is given by
\begin{equation}
\label{eq:open-loop-linear}
\begin{aligned}
\dot z &= A(\xi) z + B(\xi) \mu \\
\eta &= C(\xi) z,
\end{aligned}
\end{equation}
where $z = x - \bar x$, $\eta = y - \bar u$, $\mu = u - \bar u$, 
$A(\xi) = \frac{\partial f}{\partial x}(\bar x,\bar{u}, p)$,
$B(\xi) = \frac{\partial f}{\partial u}(\bar x,\bar{u}, p)$,
$C(\xi) = \frac{\partial h}{\partial x}(\bar x)$.

The linear approximation of the closed loop system \eqref{eq:closed-loop-system}
can then be easily characterised as follows.
\begin{proposition}
\label{prop:closed-loop-jacobian}
The linear approximation of the system \eqref{eq:closed-loop-system} close
to $\xi\in\mathcal{M}$ is given by
\begin{equation}
\label{eq:closed-loop-linear}
\begin{aligned}
\dot z = \left( A(\xi)+B(\xi)C(\xi)\right) z = A_{cl}(\xi) z.
\end{aligned}
\end{equation}
\end{proposition}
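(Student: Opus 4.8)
The plan is to reduce the statement to a direct application of the multivariate chain rule, exploiting the loop breaking identity \eqref{eq:loop-breaking}. First I would recall that for the autonomous system \eqref{eq:closed-loop-system}, the linear approximation near an equilibrium $\bar x$ is governed by the Jacobian evaluated at that point, so it suffices to show that $\frac{\partial F}{\partial x}(\bar x, p) = A(\xi) + B(\xi) C(\xi)$.

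Next I would differentiate the defining relation $F(x,p) = f(x, h(x), p)$ with respect to $x$. Since the second argument of $f$ depends on $x$ through $h$, the chain rule yields two contributions,
\begin{equation*}
\begin{aligned}
\frac{\partial F}{\partial x}(x,p) = \frac{\partial f}{\partial x}(x, h(x), p) + \frac{\partial f}{\partial u}(x, h(x), p)\, \frac{\partial h}{\partial x}(x).
\end{aligned}
\end{equation*}
The key observation is that at an equilibrium--parameter pair the evaluation points of the open loop linearisation are consistent: because $\bar u = h(\bar x)$, substituting $x = \bar x$ makes the three partial derivatives above coincide exactly with $A(\xi)$, $B(\xi)$ and $C(\xi)$ as defined for \eqref{eq:open-loop-linear}. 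This gives $\frac{\partial F}{\partial x}(\bar x, p) = A(\xi) + B(\xi)C(\xi) = A_{cl}(\xi)$, which is the claimed closed loop Jacobian.

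There is no real obstacle here beyond bookkeeping; the content is simply that closing the loop $u = y = h(x)$ introduces the feedthrough term $B(\xi)C(\xi)$ into the state matrix. The only point worth checking carefully is that the contribution through $h$ is not dropped, i.e.\ that one retains the second term in the chain rule. This term is precisely the feedback contribution that distinguishes the closed loop Jacobian $A_{cl}$ from the open loop state matrix $A$, and it is well defined because the consistency condition $\bar u = h(\bar x)$ guarantees that all partial derivatives are taken at the same equilibrium--parameter pair.
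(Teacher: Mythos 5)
Your proof is correct and follows exactly the argument the paper gives, which is simply the observation that the claim ``follows directly from the loop breaking definition \eqref{eq:loop-breaking} and the chain rule''; you have merely written out the chain rule computation $\frac{\partial F}{\partial x} = \frac{\partial f}{\partial x} + \frac{\partial f}{\partial u}\frac{\partial h}{\partial x}$ and the evaluation at $\bar u = h(\bar x)$ explicitly. No differences in approach, and no gaps.
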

\begin{proof}
This follows directly from the loop breaking definition \eqref{eq:loop-breaking} and the chain rule.
\end{proof}

The linearised open loop system \eqref{eq:open-loop-linear}
can also be described using its transfer function, which is defined as
\begin{equation}
\label{eq:transfer-function}
\begin{aligned}
G(\xi,s) = C(\xi)\left(sI-A(\xi)\right)^{-1}B(\xi) = \frac{\det\matrix{sI-A(\xi)&-B(\xi)\\C(\xi)&0}}{\det(sI-A(\xi))}
\end{aligned}
\end{equation}
with the complex variable $s\in\mathbb{C}$.

The following lemma is a tool to characterise eigenvalues of the closed loop
system \eqref{eq:closed-loop-system} by analysing the open loop system \eqref{eq:open-loop-system}.
\begin{lemma}
\label{lem:eigenvalues}
$s_0\in\mathbb{C}$ is an eigenvalue of $A_{cl}(\xi)$, if and only if one of the
following conditions holds:
\renewcommand{\theenumi}{(\roman{enumi})}
\begin{enumerate}
\item $s_0$ is not an eigenvalue of $A(\xi)$ and $G(\xi,s_0)=1$;
\item $s_0$ is an eigenvalue of $A(\xi)$ and $\det\matrix{s_0 I-A(\xi)&-B(\xi)\\C(\xi)&0}=0$.
\end{enumerate}
\end{lemma}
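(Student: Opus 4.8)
The plan is to collapse both directions of the equivalence onto a single polynomial identity relating the characteristic polynomial of $A_{cl}(\xi)$ to the bordered determinant in condition (ii), and then to split into the two cases according to whether $s_0$ is a pole of $G(\xi,\cdot)$.

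First I would compute $\det(sI-A_{cl})$ by exploiting the fact that the broken loop carries a scalar signal, so that $B=B(\xi)$ is an $n\times1$ column, $C=C(\xi)$ a $1\times n$ row, and $A_{cl}=A+BC$ is a rank-one update of $A$. For every $s$ that is not an eigenvalue of $A(\xi)$, the matrix determinant lemma (equivalently, the Schur complement of the $(2,2)$ block of $\matrix{sI-A & B\\ C & 1}$) yields
\[
\det(sI-A_{cl})=\det(sI-A-BC)=\det(sI-A)\bigl(1-C(sI-A)^{-1}B\bigr)=\det(sI-A)\,\bigl(1-G(\xi,s)\bigr).
\]
Combining this with the bordered-determinant form of the transfer function already recorded in \eqref{eq:transfer-function}, namely $\det\matrix{sI-A & -B\\ C & 0}=\det(sI-A)\,G(\xi,s)$, I obtain
\[
\det(sI-A_{cl})=\det(sI-A)-\det\matrix{sI-A & -B\\ C & 0}.\qquad(\ast)
\]

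The next step is to upgrade $(\ast)$ from a conditional to an unconditional identity. Every entry of the matrices involved is a polynomial in $s$, hence so is each side of $(\ast)$; since the identity holds for all $s$ outside the finite spectrum of $A(\xi)$, the two polynomials agree at infinitely many points and therefore coincide for all $s\in\mathbb{C}$. This clearing of the denominator $\det(sI-A)$ before any specialisation is exactly what keeps the argument valid at the poles of $G$.

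Finally I would read off the lemma, using that $s_0$ is an eigenvalue of $A_{cl}(\xi)$ precisely when $\det(s_0I-A_{cl})=0$. In case (i), $\det(s_0I-A)\neq0$, so the rank-one computation shows that $\det(s_0I-A_{cl})=\det(s_0I-A)\bigl(1-G(\xi,s_0)\bigr)$ vanishes iff $G(\xi,s_0)=1$. In case (ii), $\det(s_0I-A)=0$, so $(\ast)$ collapses to $\det(s_0I-A_{cl})=-\det\matrix{s_0I-A & -B\\ C & 0}$, which vanishes iff the bordered determinant does. I expect case (ii) to be the main obstacle: there $G(\xi,\cdot)$ has a pole and the condition $G(\xi,s_0)=1$ is meaningless, so the whole argument depends on having passed to the polynomial identity $(\ast)$ rather than reasoning with the rational function $G$ directly.
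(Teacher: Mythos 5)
Your proposal is correct and follows essentially the same route as the paper: both proofs hinge on the polynomial identity $\det(sI-A_{cl})=\det(sI-A)-\det\matrix{sI-A(\xi)&-B(\xi)\\C(\xi)&0}$ and then perform the identical two-case analysis according to whether $\det(s_0I-A(\xi))$ vanishes. The only difference is technical: the paper obtains this identity unconditionally via Schur's lemma and a cofactor expansion, whereas you obtain it off the spectrum of $A(\xi)$ via the matrix determinant lemma and then extend it to all $s\in\mathbb{C}$ by agreement of two polynomials at infinitely many points --- a valid alternative that handles the poles of $G$ just as the paper's derivation does.
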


The proof is provided in the appendix. In the following, Lemma~\ref{lem:eigenvalues} is
used with $s_0$ on the imaginary axis, to characterise critical points $\xi_c$
with the condition $G(\xi_c,s_0)=1$.

\subsection{Critical frequencies and imaginary closed loop eigenvalues}

% For ease of notation, we will neglect the steady state/parameter pair $\xi$ in this section, keeping
% in mind that all quantities considered here actually depend on $\xi$. However, all results
% hold when evaluating the quantities at a specific $\xi$, within some technical restrictions which we will
% introduce later.

In this section, the transfer function $G$ is represented
as a complex rational function with real coefficients, i.e.
\begin{equation}
\begin{aligned}
G(\xi,s) = \frac{k(\xi) q(\xi,s)}{r(\xi,s)},
\end{aligned}
\end{equation}
where $k(\xi)\in\Real$ and $q(\xi,s)$, $r(\xi,s)$ are polynomials in $s$ with 
real scalar functions of $\xi$ as coefficients.

Moreover, we make the following technical assumption.
\begin{enumerate}
\item[\textbf{(A1)}] %\label{assum:open-loop-char} 
The transfer function $G(\xi,\cdot)$ does not have
poles or zeros on the imaginary axis for any $\xi\in\mathcal{M}$, i.e.
\begin{equation}
\label{eq:open-loop-char-condition}
\begin{aligned}
\forall \xi \in \mathcal{M}\ \forall \omega \in\Real: 
k(\xi) q(\xi,j\omega)\neq 0 \textnormal{ and } r(\xi,j\omega)\neq 0.
\end{aligned}
\end{equation}
In addition, the degrees of $q(\xi,s)$ and $r(\xi,s)$ in $s$ are constant with respect to $\xi\in\mathcal{M}$.
\end{enumerate}
Starting from the premise that we are interested in stability changes produced by changing
the characteristics of the feedback loop that was broken in~\eqref{eq:loop-breaking},
this assumption is usually satisfied.

The notion of a critical frequency which is introduced in the next definition
will be useful to compute possible eigenvalues of the closed loop system
\eqref{eq:closed-loop-linear} on the imaginary axis.
\begin{definition}
\label{def:crit-freq}
$\omega_c\in\Real$ is said to be a \emph{critical frequency} for the transfer function $G(\xi,s)$,
if 
\begin{equation}
\label{eq:wcrit-def}
\begin{aligned}
G(\xi,j\omega_c) \in \Real.
\end{aligned}
\end{equation}

% For any critical frequency $\omega_c(\xi)$, a critical gain $k_c$ is defined as
% \begin{equation}
% \label{eq:kcrit-computation}
% \begin{aligned}
% k_c(\omega_c,\xi) = \frac{r(\xi,j\omega_c)}{q(\xi,j\omega_c)}.
% \end{aligned}
% \end{equation}
\end{definition}

Obviously, different values of $\xi$ will result in different critical frequencies.
For a specific $\xi$, all critical frequencies are given by the solutions of the equation
\begin{equation}
\label{eq:wcrit-computation}
\begin{aligned}
\operatorname{Im}(q(\xi,j\omega_c)r(\xi,-j\omega_c))=0,
\end{aligned}
\end{equation}
which is a scalar polynomial equation in $\omega_c$, with coefficients that are real scalar functions
of $\xi$.
% The polynomial is odd, i.e. all coefficients for even powers of $\omega_c$ are
% identically zero.

We define the set of all critical frequencies for a specific $\xi$ as
\begin{equation}
\label{eq:critical-freq-set}
\begin{aligned}
\Omega_c(\xi) = \left\lbrace \omega\in\Real \mid \operatorname{Im}(q(\xi,j\omega)r(\xi,-j\omega))=0 \right\rbrace.
\end{aligned}
\end{equation}
Because only the imaginary part is considered, the polynomial in~\eqref{eq:wcrit-computation} is odd.
% and by assumption, it is of constant degree for all $\xi\in\mathcal{M}$. 
The following properties of the set $\Omega_c(\xi)$ can then be shown easily.
\begin{proposition}
For any $\xi\in\mathcal{M}$, the set $\Omega_c(\xi)$ satisfies the conditions
\renewcommand{\labelenumi}{(\roman{enumi})}
\begin{enumerate}
\item $0\in\Omega_c(\xi)$;
\item $\omega_c\in\Omega_c(\xi)$ implies that $-\omega_c\in\Omega_c(\xi)$;
\item either $\Omega_c(\xi)=\Real$ or $\Omega_c(\xi)$ has finitely many elements.
\end{enumerate}
\end{proposition}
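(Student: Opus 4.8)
The plan is to treat all three claims through a single object: the real-valued function $\omega \mapsto \operatorname{Im}\bigl(q(\xi,j\omega)r(\xi,-j\omega)\bigr)$, whose zero set is by definition \eqref{eq:critical-freq-set} exactly $\Omega_c(\xi)$. First I would observe that, since $q(\xi,\cdot)$ and $r(\xi,\cdot)$ are polynomials in $s$ with real coefficients for fixed $\xi$, the product $P(\omega) := q(\xi,j\omega)r(\xi,-j\omega)$ is a polynomial in $\omega$ with complex coefficients, so that $p_I(\omega) := \operatorname{Im}(P(\omega))$ is a genuine polynomial in $\omega$ with real coefficients. This reduces the proposition to elementary statements about the real root set of one real polynomial.

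The key structural fact I would establish next is a conjugation symmetry. Because the coefficients of $q$ and $r$ are real, for real $\omega$ one has $\overline{q(\xi,j\omega)} = q(\xi,-j\omega)$ and $\overline{r(\xi,-j\omega)} = r(\xi,j\omega)$, hence $\overline{P(\omega)} = P(-\omega)$. Taking imaginary parts gives $p_I(-\omega) = -p_I(\omega)$, i.e.\ $p_I$ is odd. This single identity drives the first two claims: claim (i) follows because an odd function vanishes at the origin, so $p_I(0)=0$ and $0\in\Omega_c(\xi)$; claim (ii) follows because $p_I(\omega_c)=0$ forces $p_I(-\omega_c)=-p_I(\omega_c)=0$, so $-\omega_c\in\Omega_c(\xi)$.

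For claim (iii) I would invoke the standard dichotomy for polynomials: the real polynomial $p_I$ is either identically zero or not. In the former case its zero set is all of $\Real$, giving $\Omega_c(\xi)=\Real$; in the latter case a nonzero polynomial of degree $d$ has at most $d$ real roots, so $\Omega_c(\xi)$ is finite. This exhausts the two alternatives.

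I do not expect a genuine obstacle here; the whole proposition rests on the two routine observations that $p_I$ is an honest real polynomial in $\omega$ and that the reality of the coefficients makes it odd. The only point requiring a little care is the verification of the conjugation identity $\overline{P(\omega)}=P(-\omega)$, where one must keep track of the two different sign conventions ($j\omega$ appearing in $q$ versus $-j\omega$ in $r$) so that the complex conjugation matches up correctly. Once this is in place, (i)--(iii) are immediate.
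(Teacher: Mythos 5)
Your proof is correct and takes essentially the same route as the paper, which simply notes that the polynomial $\operatorname{Im}(q(\xi,j\omega)r(\xi,-j\omega))$ is odd in $\omega$ and leaves the rest to the reader; your conjugation identity $\overline{P(\omega)}=P(-\omega)$ is exactly the justification of that oddness, and the three claims then follow as you state.
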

Note that $\Omega_c(\xi)=\Real$ whenever $\operatorname{Im}(q(\xi,j\omega)r(\xi,-j\omega))$ is the zero polynomial,
which in turn is the case whenever only the even powers of $s$ in the polynomials $q(\xi,s)$ and $r(\xi,s)$
have nonzero coefficients. 
However, this typically contradicts assumption (A1), so we will
not consider this case specifically.

The relevance of critical frequencies for existence of eigenvalues on
the imaginary axis is shown by the following result.
\begin{proposition}
Assume that (A1) is satisfied. If $j\omega_c$ with $\omega_c\in\Real$
is an eigenvalue of $A_{cl}(\xi)$, then $\omega_c\in\Omega_c(\xi)$.
\end{proposition}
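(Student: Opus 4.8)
The plan is to reduce the statement to Lemma~\ref{lem:eigenvalues} and then eliminate its second case using assumption (A1). The preparatory observation is that, at any frequency where $r(\xi,j\omega)\neq 0$, membership in the set $\Omega_c(\xi)$ defined in~\eqref{eq:critical-freq-set} is equivalent to $G(\xi,j\omega)$ being real. Indeed, since $k(\xi)$ is real and the coefficients of $q$ and $r$ are real functions of $\xi$, one has $\overline{r(\xi,j\omega)}=r(\xi,-j\omega)$, whence
\begin{equation*}
\begin{aligned}
\operatorname{Im} G(\xi,j\omega) = \frac{k(\xi)}{|r(\xi,j\omega)|^{2}}\,\operatorname{Im}\bigl(q(\xi,j\omega)\,r(\xi,-j\omega)\bigr).
\end{aligned}
\end{equation*}
The right-hand side vanishes precisely on $\Omega_c(\xi)$, and by (A1) the denominator is nonzero for every real $\omega$, so the equivalence holds at every real frequency. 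This turns the target conclusion $\omega_c\in\Omega_c(\xi)$ into the assertion that $G(\xi,j\omega_c)\in\Real$.

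Next I would apply Lemma~\ref{lem:eigenvalues} with $s_0=j\omega_c$. In case~(i) of that lemma, $j\omega_c$ is not an eigenvalue of $A(\xi)$ and $G(\xi,j\omega_c)=1$; since $1$ is real, the preparatory equivalence immediately gives $\omega_c\in\Omega_c(\xi)$. Thus case~(i) produces exactly the desired conclusion with no further work.

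The crux is therefore to dispose of case~(ii), in which $j\omega_c$ is an eigenvalue of $A(\xi)$. Here I would invoke (A1) to show this situation cannot arise on the imaginary axis. Reading off the denominator in the determinant form~\eqref{eq:transfer-function}, the eigenvalues of $A(\xi)$ are the roots of $\det(sI-A(\xi))$, and this polynomial agrees with $r(\xi,s)$ up to a nonzero scalar factor; hence the requirement $r(\xi,j\omega)\neq 0$ imposed by (A1) says exactly that $A(\xi)$ has no eigenvalue on the imaginary axis. Consequently case~(ii) is vacuous for $s_0=j\omega_c$, so by Lemma~\ref{lem:eigenvalues} only case~(i) can hold, and the conclusion follows. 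The single delicate point to pin down in the write-up is this identification of $r(\xi,\cdot)$ with the characteristic polynomial: if one instead permits pole--zero cancellations in the realisation, so that $r$ is a proper factor of $\det(sI-A(\xi))$, then an imaginary eigenvalue of $A(\xi)$ could survive as a hidden (uncontrollable or unobservable) mode without being a pole of $G$, and case~(ii) would then have to be excluded by strengthening (A1) to demand $\det(j\omega I-A(\xi))\neq 0$ directly.
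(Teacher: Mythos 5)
Your proof is correct and takes essentially the same route as the paper's: (A1) rules out case~(ii) of Lemma~\ref{lem:eigenvalues} because the roots of $r(\xi,\cdot)$ are the eigenvalues of $A(\xi)$, and case~(i) then gives $G(\xi,j\omega_c)=1\in\Real$, hence $\omega_c\in\Omega_c(\xi)$. Your closing caveat about pole--zero cancellations is a sound observation rather than a gap: it simply makes explicit that $r(\xi,s)$ must be read as $\det(sI-A(\xi))$, exactly as in the determinant representation~\eqref{eq:transfer-function}, which is the identification the paper's own one-line proof uses implicitly.
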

\begin{proof}
By assumption (A1), $j\omega_c$ is not an eigenvalue of $A(\xi)$.
By Lemma~\ref{lem:eigenvalues}, we have $G(\xi,j\omega_c)=1$ and thus $\omega_c\in\Omega_c(\xi)$.
\end{proof}

The concept of critical frequencies can be understood intuitively when
considering the Nyquist curve of the transfer function $G(\xi,j\omega)$. A critical frequency is
any value $\omega_c$ at which the Nyquist curve crosses the real axis. 
This is obviously a necessary condition for having $G(\xi,j\omega_c)=1$, which corresponds
to the existence of an eigenvalue on the imaginary axis as shown in Lemma~\ref{lem:eigenvalues}.
Our concept is thus closely related to the idea of the gain margin for robustness
analysis of linear control systems \citep{SkogestadPos1996}.

Since a variation of the equilibrium--parameter pair $\xi$
influences the polynomial equation~\eqref{eq:wcrit-computation}, the
set of critical frequencies $\Omega_c(\xi)$ may change significantly with $\xi$.
In particular,
the number of elements in $\Omega_c(\xi)$ in general needs not to be constant with
respect to $\xi$, which complicates the analysis.
However, one can show that there is a minimal number of critical frequencies of
the transfer function $G(\xi,s)$, which depends on the number of open loop poles and zeros and 
whether they are located in the right or the left half-plane.
To this end, define the number 
\begin{equation}
\label{eq:alpha-number}
\begin{aligned}
\alpha = \vert p_+ - p_- + z_- - z_+ \vert, 
\end{aligned}
\end{equation}
where $p_+$ ($p_-$) is the number
of poles of $G(\xi,\cdot)$ and $z_+$ ($z_-$) is the number
of zeros of $G(\xi,\cdot)$ in the right (left) half complex plane.
Under assumption (A1), $\alpha$ is constant with respect to $\xi\in\mathcal{M}$. 
The number of elements in the set of critical frequencies can now be characterised
by $\alpha$.

\begin{proposition}
\label{prop:minimality}
Let $\alpha$ be defined by \eqref{eq:alpha-number} and assume that (A1)
is satisfied.
Then, for any $\xi\in\mathcal{M}$,
$\Omega_c(\xi)$ has at least $\alpha$ distinct elements, if $\alpha$ is odd, and at least
$\alpha-1$ distinct elements, if $\alpha$ is even.
\end{proposition}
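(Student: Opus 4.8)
The plan is to track the continuous phase of the open loop transfer function along the imaginary axis and to count how often it crosses a multiple of $\pi$, since by \eqref{eq:wcrit-def} a real $\omega_c$ is a critical frequency precisely when $G(\xi,j\omega_c)$ is real, i.e.\ when $\arg G(\xi,j\omega_c)\in\pi\mathbb{Z}$. By assumption (A1) and \eqref{eq:open-loop-char-condition}, $G(\xi,j\omega)\neq 0$ for every real $\omega$, so the phase $\phi(\omega)=\arg G(\xi,j\omega)$ is a well-defined continuous function on $\Real$, and one has $\Omega_c(\xi)=\{\omega\in\Real\mid\phi(\omega)\in\pi\mathbb{Z}\}$. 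Fixing $\xi\in\mathcal{M}$, the whole argument is then a counting statement for this one scalar function.

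The key step is to compute the total phase increment $\phi(+\infty)-\phi(-\infty)$. Writing $G(\xi,\cdot)$ as the real constant $k(\xi)$ times a quotient whose numerator and denominator are products of factors $(s-a)$ ranging over the zeros and poles of $G(\xi,\cdot)$ respectively, I would evaluate at $s=j\omega$ and use that, as $\omega$ runs from $-\infty$ to $+\infty$, a single factor $(j\omega-a)$ accumulates phase $+\pi$ if $\operatorname{Re}(a)<0$ and $-\pi$ if $\operatorname{Re}(a)>0$, while the real constant $k(\xi)$ contributes nothing. Summing the contributions of the zeros with a plus sign and of the poles with a minus sign yields
\begin{equation*}
\begin{aligned}
\phi(+\infty)-\phi(-\infty)=\pi\bigl(p_+-p_-+z_--z_+\bigr),
\end{aligned}
\end{equation*}
whose magnitude is exactly $\pi\alpha$ by \eqref{eq:alpha-number}. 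By (A1) no factor sits on the imaginary axis, so each contribution is a genuine, well-defined quantity.

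Next I would exploit the conjugate symmetry $G(\xi,-j\omega)=\overline{G(\xi,j\omega)}$, which forces the continuous lift to satisfy $\phi(-\omega)=2\phi(0)-\phi(\omega)$ and $\phi(0)\in\pi\mathbb{Z}$ (since $G(\xi,0)$ is real). Substituting $\omega\to\infty$ halves the total increment onto the positive half-line, giving $|\phi(+\infty)-\phi(0)|=\pi\alpha/2$. Starting from the multiple $\phi(0)$ of $\pi$ and changing by $\alpha\pi/2$, the intermediate value theorem produces, for each multiple of $\pi$ lying strictly between $\phi(0)$ and $\phi(+\infty)$, at least one $\omega>0$ with $\phi(\omega)$ equal to that multiple; distinct target multiples force distinct frequencies. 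The number of such intermediate multiples is $\alpha/2-1$ when $\alpha$ is even and $(\alpha-1)/2$ when $\alpha$ is odd. Each such $\omega>0$ has, by the symmetry, a companion $-\omega<0$, and $\omega=0$ is always a critical frequency; adding these up gives at least $1+2(\alpha/2-1)=\alpha-1$ distinct elements for even $\alpha$ and at least $1+2\cdot(\alpha-1)/2=\alpha$ for odd $\alpha$, as claimed.

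I expect the main obstacle to be the rigorous justification of the phase-increment formula: one must control the phase of each factor as an honest continuous real-valued quantity rather than an angle taken modulo $2\pi$, and confirm that the limits $\phi(\pm\infty)$ exist even though $G$ vanishes at infinity when the relative degree is positive (the reciprocal factors still contribute definite limiting phases). The remaining bookkeeping---relating the parity of $\alpha$ to whether $\phi(+\infty)$ lands on a multiple of $\pi$, and confirming that the intermediate-value crossings yield genuinely distinct frequencies---is routine once the symmetry relation and the increment formula are in hand.
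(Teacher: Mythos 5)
Your proof is correct and follows essentially the same route as the paper's: both establish that the continuous phase of $G(\xi,j\omega)$ changes by $\alpha\pi$ in total as $\omega$ runs over $\Real$, use the conjugate symmetry $G(\xi,-j\omega)=\overline{G(\xi,j\omega)}$ to center that sweep, and count the multiples of $\pi$ attained via the intermediate value theorem (with the same even/odd endpoint bookkeeping). The only cosmetic differences are that you derive the phase-increment formula directly from the factored transfer function, where the paper cites it from linear control theory, and that you count crossings on the positive half-line and double by symmetry rather than counting over the whole interval $(-\tfrac{\alpha\pi}{2},\tfrac{\alpha\pi}{2})$ at once.
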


The proof is presented in the appendix.

The above result is used to formulate the property of minimality for the set of critical
frequencies.
\begin{definition}
\label{def:minimality}
Under the assumptions of Proposition~\ref{prop:minimality},
the set of critical frequencies $\Omega_c(\xi)$ is called \emph{minimal}, if it contains
exactly $\tilde \alpha$ elements, where
\begin{equation}
\begin{aligned}
\tilde\alpha = \left\lbrace 
\begin{aligned}
\alpha,&\quad\textnormal{if $\alpha$ is odd} \\
\alpha-1,&\quad\textnormal{if $\alpha$ is even.}
\end{aligned}\right.
\end{aligned}
\end{equation}
\end{definition}

This definition is applied in the second technical assumption we are going to
make use of.
\begin{enumerate}
\item[\textbf{(A2)}] %\label{assum:minimality} 
The set of critical frequencies $\Omega_c(\xi)$ is minimal for any $\xi\in\mathcal{M}$.
\end{enumerate}

If (A2) holds, we can label the roots of the polynomial 
equation~\eqref{eq:wcrit-computation} in a consistent way, writing
\begin{equation}
\label{eq:minimal-set-crit-freq}
\begin{aligned}
\Omega_c(\xi) = \left\lbrace \omega_c^1(\xi), \omega_c^2(\xi), \ldots ,
\omega_c^{\tilde\alpha}(\xi) \right\rbrace,
\end{aligned}
\end{equation}
where the $\omega_c^i$ are continuous functions of the equilibrium--parameter pair $\xi$ and
can be identified with different solution branches of the polynomial equation \eqref{eq:wcrit-computation}.

Given a transfer function $G(\xi,\cdot)$ and a corresponding set of
critical frequencies $\Omega_c(\xi)$, Proposition~\ref{prop:minimality} can be used to easily
check the minimality of $\Omega_c(\xi)$.
Graphically, a sufficient condition for minimality of $\Omega_c(\xi)$ is that the Nyquist
curve $G(\xi,j\omega)$ encircles the origin monotonically as $\omega$ varies from $-\infty$
to $\infty$.

\section{Main results}
\label{sec:main-results}

\subsection{Topological equivalence of equilibria}

Changes in stability properties of equilibrium points are most easily studied
using the concept of topological equivalence.
Here, we use a definition for hyperbolic equilibrium points only
(see \cite{Kuznetsov1995} for more details).

\begin{definition}
\label{def:topological-equivalence}
Let $\xi_1,\xi_2\in\mathcal{M}$ be two hyperbolic equilibrium--parameter pairs
of the system \eqref{eq:closed-loop-system}.
$\xi_1$ and $\xi_2$ are said to be \emph{topologically equivalent}, if the Jacobians $\frac{\partial F}{\partial x}(\xi_1)$ 
and $\frac{\partial F}{\partial x}(\xi_2)$ have the same number of eigenvalues in the left
and right half-plane.
\end{definition}

It is a well known result from dynamical systems theory that the topological equivalence of all
equilibria in two systems is a necessary condition for topological equivalence of the flows.
Let us consider two variants of the system \eqref{eq:closed-loop-system}, one with parameter values
$p_1$ and the other with parameter values $p_2$.
In the simple case when there is only one equilibrium point in each variant of the system,
corresponding to the pairs $\xi_1$ and $\xi_2$, topological equivalence of $\xi_1$ and $\xi_2$
is a necessary condition for topological equivalence of the flows.
In applications, we are often interested in finding parameter values $p_2$ such that the system
\eqref{eq:closed-loop-system} changes its dynamical behaviour when varying parameters from initial
values $p_1$ to $p_2$.
For this problem, it is sufficient to find pairs $\xi_1$ and $\xi_2$ which are not
topologically equivalent.
Due to the continuous dependence of eigenvalues on parameters,
this can only happen when the Jacobian $\frac{\partial F}{\partial x}(\xi_c)$ has eigenvalues
on the imaginary axis for some critical point $\xi_c \in \mathcal{M}$.
At this point, we can make use of the methodology developed in the previous section.

To this end, consider the set of critical frequencies $\Omega_c(\xi)$ for a specific
value of the equilibrium--parameter pair $\xi$.
Define the number $\beta(\xi)$ to be the number of elements $\omega_c$ in $\Omega_c(\xi)$ such
that $G(\xi,j\omega_c)>1$, i.e.
\begin{equation}
\begin{aligned}
\beta(\xi) = \operatorname{card}\left\lbrace \omega_c\in\Omega_c(\xi) \mid G(\xi,j\omega_c)>1 \right\rbrace,
\end{aligned}
\end{equation}
where $\operatorname{card}\mathcal{S}$ denotes the number of elements in the set $\mathcal{S}$.
% Since $G(\xi,j\omega_c)\in\Real$ for $\omega_c\in\Omega_c(\xi)$ by Definition~\ref{def:crit-freq}, 
% the number $\beta(\xi)$ is always well defined, i.e. the inequality $G(\xi,j\omega_c)>1$ 
% can always be evaluated.

Geometrically, if $\Omega_c(\xi)$ is minimal, $\beta(\xi)$ gives the winding number 
of the graph of $G(\xi,j\omega)$ around the
point $1$ in the complex plane (see Lemma~\ref{lem:winding-number} in the Appendix).
The argument principle can then be used to characterise topologically equivalent equilibrium--parameter
pairs of the system \eqref{eq:closed-loop-system} via the number $\beta(\xi)$.
We first give some intermediate results as Lemmas before presenting the main theorem.
The Lemmas are proven in the appendix.
\begin{lemma}
\label{lem:ordered-crit-freq}
If the set of critical frequencies $\Omega_c(\xi)$ is minimal, then in the ordered sequence
of critical frequencies %$\omega_c^1(\xi)$, $\omega_c^2(\xi)$, $\ldots$, $\omega_c^\alpha(\xi)$
$\omega_c^1(\xi) < \omega_c^2(\xi) < \cdots < \omega_c^\alpha(\xi)$,
we have \[G(\xi,j\omega_c^i(\xi)) G(\xi,j\omega_c^{i-1}(\xi)) < 0\] where $i=2,\ldots,\alpha$.
\end{lemma}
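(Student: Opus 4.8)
The plan is to translate the sign condition into a statement about the continuous argument of the Nyquist curve and then to deduce monotonicity of that argument from minimality. Since (A1) guarantees that $G(\xi,j\omega)$ has neither poles nor zeros on the imaginary axis, the curve $\omega\mapsto G(\xi,j\omega)$ avoids the origin and one may choose a continuous branch $\theta(\omega)=\arg G(\xi,j\omega)$ on all of $\Real$. By definition a critical frequency is exactly a value $\omega$ with $\theta(\omega)\in\pi\mathbb{Z}$, and at such a point $G(\xi,j\omega_c)$ is a nonzero real number whose sign is positive when $\theta(\omega_c)$ is an even multiple of $\pi$ and negative when it is an odd multiple. Hence the asserted inequality $G(\xi,j\omega_c^i)\,G(\xi,j\omega_c^{i-1})<0$ is equivalent to the statement that $\theta(\omega_c^i)-\theta(\omega_c^{i-1})$ is an odd multiple of $\pi$; it therefore suffices to show that between two consecutive critical frequencies $\theta$ changes by exactly $\pm\pi$, i.e.\ that $\theta$ is monotone.

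To obtain monotonicity I would count the crossings of the real axis according to their orientation. Writing the crossings as the sign changes of the odd real polynomial $\operatorname{Im}\bigl(q(\xi,j\omega)r(\xi,-j\omega)\bigr)$, let $U$ be the number of upward crossings (where $\operatorname{Im} G$ passes from negative to positive as $\omega$ increases) and $D$ the number of downward crossings. Assumption (A2) states precisely that the total number of crossings is minimal, so $U+D=\tilde\alpha$. On the other hand the \emph{net} number of crossings $U-D$ is fixed by the asymptotic directions of $G(\xi,j\omega)$ as $\omega\to\pm\infty$: these directions are read off from the relative degree and the pole--zero configuration, and the accumulated phase change $\theta(+\infty)-\theta(-\infty)=\pi\,(p_+-p_-+z_--z_+)$ is exactly the quantity whose modulus governs Proposition~\ref{prop:minimality}. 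Carrying out that computation gives $|U-D|=\tilde\alpha$. Here I would use the conjugate symmetry $G(\xi,-j\omega)=\overline{G(\xi,j\omega)}$, which makes the Nyquist curve symmetric about the real axis, to reduce the bookkeeping to $\omega\ge 0$ and to account for the crossing forced at $\omega=0$ by part (i) of the proposition on $\Omega_c(\xi)$.

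With both identities in hand the conclusion is immediate: from $U+D=\tilde\alpha=|U-D|$ together with the elementary inequality $U+D\ge|U-D|$, equality forces $\min(U,D)=0$, so every real-axis crossing has the same orientation. Consequently $\theta$ crosses the grid $\pi\mathbb{Z}$ always in the same direction, it is monotone, and at each successive critical frequency exactly one further multiple of $\pi$ is passed. Two consecutive multiples of $\pi$ have opposite parity, so $G(\xi,\cdot)$ is alternately positive and negative at $\omega_c^{i-1}$ and $\omega_c^i$, which is the claim.

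The step I expect to be delicate is establishing $|U-D|=\tilde\alpha$ rigorously, that is, the asymptotic/endpoint analysis. One must verify that no crossings are lost ``at infinity'' (the curve approaches its limiting direction along the real axis when the relative degree is even and along the imaginary axis when it is odd, without a finite crossing there), that each crossing is simple so that its orientation is well defined, and that the parity cases $\alpha$ even versus $\alpha$ odd reproduce $\tilde\alpha=\alpha-1$ respectively $\tilde\alpha=\alpha$. All of this is already implicit in the proof of Proposition~\ref{prop:minimality}, so in the write-up I would reuse that phase-change computation rather than repeat it, and invoke (A1) to guarantee that the crossings are isolated and simple.
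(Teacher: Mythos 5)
Your proposal is correct and takes essentially the same route as the paper's proof: both pass to the argument of $G(\xi,j\omega)$ along the Nyquist curve, reuse the phase-span interval $I_\alpha$ computed in the proof of Proposition~\ref{prop:minimality}, and conclude that minimality forces consecutive critical frequencies to lie at adjacent multiples of $\pi$, whence the sign alternation by parity. The only difference is bookkeeping: the paper gets monotonicity tersely (minimality plus the fact that every level $k\pi\in I_\alpha$ is attained gives exactly one critical frequency per level), whereas your crossing count $U+D=\tilde\alpha=\vert U-D\vert$ makes the same fact explicit, including the transversality of the crossings that the paper leaves implicit.
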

% interpretation: encirclement of the origin
\begin{lemma}
\label{lem:winding-number}
Under the assumptions of Theorem~\ref{theo:topological-equivalence}, the
winding number of the image of the Nyquist curve $\Gamma$ under the transfer
function $G(\xi_i,\cdot)$, $i=1,2$, around the point $1$ is given by
\begin{equation*}
\begin{aligned}
\vert wn(G(\xi_i,\Gamma),1) \vert = \beta(\xi_i).
\end{aligned}
\end{equation*}
\end{lemma}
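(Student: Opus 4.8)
The plan is to prove Lemma~\ref{lem:winding-number} by directly counting the signed crossings of the Nyquist curve with the portion of the real axis to the right of the point $1$, and showing this matches $\beta(\xi_i)$. The winding number of a closed curve around a point equals the net number of times it crosses any ray emanating from that point, counted with sign according to orientation. The natural ray to use here is the real axis extending from $1$ to $+\infty$, since the Nyquist curve crosses the real axis exactly at the critical frequencies, where by definition $G(\xi,j\omega_c)\in\Real$.

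First I would recall that by assumption~(A1) the transfer function has no poles or zeros on the imaginary axis, so $G(\xi_i,j\omega)$ is well-defined and finite for all $\omega$, and the curve crosses the real axis precisely at the critical frequencies $\omega_c\in\Omega_c(\xi_i)$. The crossings with $G(\xi_i,j\omega_c)>1$ are exactly the $\beta(\xi_i)$ crossings that intersect the ray $(1,+\infty)$, while crossings with $G(\xi_i,j\omega_c)<1$ land to the left of $1$ and do not contribute. Here assumption~(A2), minimality of $\Omega_c(\xi_i)$, is essential: it guarantees that consecutive critical frequencies are simple roots of the polynomial equation~\eqref{eq:wcrit-computation}, so the Nyquist curve genuinely transverses the real axis at each $\omega_c$ rather than merely touching it tangentially. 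This is where Lemma~\ref{lem:ordered-crit-freq} does the decisive work: it shows that $G(\xi_i,j\omega_c^{i})$ and $G(\xi_i,j\omega_c^{i-1})$ have opposite signs along the ordered sequence, so the real-axis values strictly alternate in sign. Consequently every crossing on the ray $(1,+\infty)$ has the same orientation, which is exactly what is needed for the crossings to add up coherently rather than cancelling.

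The main step is then to assemble the count into a winding number. Because the sign of $\operatorname{Re}(G(\xi_i,j\omega))$ alternates between consecutive critical frequencies, each passage of the curve through a point of $(1,+\infty)$ carries a consistent sign of $\frac{d}{d\omega}\operatorname{Im}(G(\xi_i,j\omega))$, so all $\beta(\xi_i)$ such crossings contribute with the same sign to the signed crossing count. Taking the absolute value, as in the statement, removes any dependence on the overall orientation of the curve, giving $\vert wn(G(\xi_i,\Gamma),1)\vert = \beta(\xi_i)$. One should also note the symmetry $\omega_c\in\Omega_c(\xi_i)\Rightarrow-\omega_c\in\Omega_c(\xi_i)$ together with the fact that $G$ has real coefficients, so that the curve is symmetric about the real axis and the count over the full contour $\Gamma$ (from $-\infty$ to $+\infty$, closed at infinity) is consistent.

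The hard part will be handling the closure of the Nyquist contour $\Gamma$ at infinity and the behaviour of the curve near $\omega=0$, which is always a critical frequency by part~(i) of the earlier proposition. One must verify that the arc at infinity and any real-axis value at $\omega=0$ do not spuriously contribute to or subtract from the crossing count on $(1,+\infty)$; this requires the constant-degree clause of~(A1) to control the behaviour of $G$ as $\omega\to\pm\infty$. Once it is established that the curve closes up without introducing extra crossings of the ray, the alternation from Lemma~\ref{lem:ordered-crit-freq} together with the definition of $\beta(\xi_i)$ finishes the argument cleanly.
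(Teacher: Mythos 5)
Your proposal is correct and takes essentially the same approach as the paper's proof: both reduce the winding number to a count of real-axis crossings of the Nyquist curve to the right of the point $1$, use Lemma~\ref{lem:ordered-crit-freq} to guarantee that all such crossings wind in the same direction so that no cancellation can occur, and handle the closure of the contour at infinity (where the paper notes that the loop-breaking structure gives $G(\xi_i,\pm j\infty)=0$). Your signed ray-crossing formulation, with its explicit attention to excluding tangential touches under (A2), is a slightly more careful rendering of the paper's argument that each cut to the right of $1$ is preceded and followed by cuts of the negative real axis and hence contributes exactly one winding.
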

\begin{lemma}
\label{lem:same-winding-number}
Under the assumptions of Theorem~\ref{theo:topological-equivalence}, we have
\begin{equation*}
\begin{aligned}
\big\vert wn(G(\xi_1,\Gamma),1) - wn(G(\xi_2,\Gamma),1) \big\vert = 
\big\vert \beta(\xi_1) - \beta(\xi_2) \big\vert.
\end{aligned}
\end{equation*}
\end{lemma}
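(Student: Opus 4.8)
The plan is to reduce the claim to a statement about the \emph{sign} of the two winding numbers, and then to establish that this sign is common to $\xi_1$ and $\xi_2$. By Lemma~\ref{lem:winding-number} we already know $|wn(G(\xi_i,\Gamma),1)| = \beta(\xi_i)$ for $i=1,2$. Since $\beta(\xi_i)\ge 0$, this means $wn(G(\xi_i,\Gamma),1) = s_i\,\beta(\xi_i)$ for some $s_i\in\{-1,+1\}$, the choice of $s_i$ being irrelevant when $\beta(\xi_i)=0$. If $s_1$ and $s_2$ may be taken equal, say $s_1=s_2=s$, then $wn(G(\xi_1,\Gamma),1)-wn(G(\xi_2,\Gamma),1) = s\,(\beta(\xi_1)-\beta(\xi_2))$, and taking absolute values yields the assertion at once. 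Thus the whole content of the lemma lies in showing that the two Nyquist images encircle the point $1$ with the same orientation.

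To identify this orientation I would compute the winding number around $1$ by counting signed crossings of the horizontal ray $\{x\in\Real\mid x>1\}$ issuing from the point $1$. The curve $G(\xi,j\omega)$ meets this ray exactly at the critical frequencies $\omega_c\in\Omega_c(\xi)$ with $G(\xi,j\omega_c)>1$, and the sign attached to each crossing is the direction (upward or downward) in which the curve pierces the real axis, governed by $\operatorname{sign}\frac{d}{d\omega}\operatorname{Im}G(\xi,j\omega)$ at $\omega_c$. Because the coefficients are real, $G(\xi,-j\omega)=\overline{G(\xi,j\omega)}$, so $\operatorname{Im}G(\xi,j\omega)$ is odd in $\omega$ and its derivative even; hence the crossings at $\pm\omega_c$ occur at the same real value and in the same direction, reinforcing rather than cancelling. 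That $|wn|=\beta$ in Lemma~\ref{lem:winding-number} already forces all $G>1$ crossings to share one direction for fixed $\xi$, which I would re-derive from Lemma~\ref{lem:ordered-crit-freq}: along the ordered sequence the values $G(\xi,j\omega_c^i)$ alternate in sign, so consecutive real-axis crossings lie alternately in the upper and lower half-plane and therefore alternate in crossing direction; consequently all crossings on the \emph{positive} real axis—in particular those with $G>1$—carry one common direction $s(\xi)$.

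It then remains to show that $s(\xi)$ is constant over the connected manifold $\mathcal M$, which I expect to be the main obstacle. The crossing direction at the extreme critical frequency is fixed by the high-frequency asymptotics of $G(\xi,j\omega)$, which under (A1) are controlled by the constant degrees of $q(\xi,\cdot)$ and $r(\xi,\cdot)$ and the signs of their leading coefficients; by the alternation just established, the direction at every crossing position is determined by its parity, and the positions carrying $G>1$ sit at a fixed parity since the sign pattern of $G$ along the ordered sequence is itself rigid. By (A2) the branches $\omega_c^i(\xi)$ are continuous on the connected set $\mathcal M$, so provided each crossing is transversal, i.e.\ $\frac{d}{d\omega}\operatorname{Im}G(\xi,j\omega_c^i)\neq 0$, the quantity $\operatorname{sign}\frac{d}{d\omega}\operatorname{Im}G$ cannot jump along a branch and $s(\xi)$ is locally, hence globally, constant. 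The delicate point is precisely excluding tangential crossings, where a branch would touch the real axis without piercing it; I would rule these out by noting that under (A1)--(A2) the equation~\eqref{eq:wcrit-computation} has exactly $\tilde\alpha$ \emph{simple} roots, so a double (tangential) root would violate minimality, and then conclude $s_1=s(\xi_1)=s(\xi_2)=s_2$, finishing the proof.
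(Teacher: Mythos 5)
Your reduction of the lemma to the claim that the two winding numbers share a common sign is exactly the paper's strategy, and up to that point (including the appeal to Lemma~\ref{lem:winding-number} and the alternation from Lemma~\ref{lem:ordered-crit-freq}) your argument matches the intended one. The genuine gap lies in how you establish that common sign. Your third paragraph fixes the orientation by following the continuous branches $\omega_c^i(\xi)$ across the connected manifold $\mathcal{M}$ and arguing that crossing directions cannot jump along a branch; this explicitly invokes (A2), i.e.\ minimality of $\Omega_c(\xi)$ at \emph{every} $\xi$. But the lemma is stated under the assumptions of Theorem~\ref{theo:topological-equivalence}, which grant only (A1), hyperbolicity, and minimality of $\Omega_c$ at the two points $\xi_1$ and $\xi_2$ --- not along any path connecting them. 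In between, critical frequencies may appear, disappear or collide, so there is no global branch structure to track and no parity to propagate. The paper's proof needs no path at all: under (A1) no pole or zero of $G(\xi,\cdot)$ can cross the imaginary axis and the degrees of $q$ and $r$ are constant, so each of the counts $p_+,p_-,z_+,z_-$ is constant on the connected manifold $\mathcal{M}$; hence the \emph{signed} phase rotation $\arg G(\xi,j\infty)-\arg G(\xi,-j\infty)=(p_+-p_-+z_--z_+)\pi$ is the same at $\xi_1$ and $\xi_2$, and this quantity determines the common orientation of the windings about the point $1$. That single observation replaces your entire continuity argument and stays within the stated hypotheses.

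A secondary flaw: the claim that ``under (A1)--(A2) the equation~\eqref{eq:wcrit-computation} has exactly $\tilde\alpha$ \emph{simple} roots'' overstates what minimality gives. Minimality does exclude even-multiplicity (tangential) roots: if the phase touched the level $k\pi$ and turned back, it would have to cross that level somewhere else, since its limits $\pm\frac{\alpha\pi}{2}$ lie on opposite sides of $k\pi$; this produces a second distinct root for the same $k$ and hence more than $\tilde\alpha$ roots in total. But minimality does not exclude roots of odd multiplicity at least three, where the Nyquist curve does cross the real axis but with $\frac{d}{d\omega}\operatorname{Im}G(\xi,j\omega)=0$; at such a point the derivative sign you propagate vanishes, so even if (A2) were granted, your argument would have to be rephrased in terms of the sign of $\operatorname{Im}G(\xi,j\omega)$ on the gaps between consecutive roots (which is what genuinely stays constant) rather than the sign of the derivative at the roots.
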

\begin{theorem}
\label{theo:topological-equivalence}
Assume that (A1) is satisfied. Let $\xi_1,\xi_2\in\mathcal{M}$ be
two hyperbolic equilibrium--parameter pairs of \eqref{eq:closed-loop-system}
such that $\Omega_c(\xi_1)$ and $\Omega_c(\xi_2)$ are minimal. Then $\xi_1$ and $\xi_2$ are
topologically equivalent, if and only if
\begin{equation*}
\begin{aligned}
\beta(\xi_1) = \beta(\xi_2).
\end{aligned}
\end{equation*}
\end{theorem}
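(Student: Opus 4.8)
The plan is to translate the topological condition into a count of right half-plane eigenvalues and then read that count off the Nyquist curve through the argument principle. Since $\xi_1$ and $\xi_2$ are hyperbolic, neither $A_{cl}(\xi_1)$ nor $A_{cl}(\xi_2)$ has an eigenvalue on the imaginary axis, and because both Jacobians are $n\times n$, the two-sided requirement of Definition~\ref{def:topological-equivalence} collapses to the single condition that $A_{cl}(\xi_1)$ and $A_{cl}(\xi_2)$ have the same number $N_{cl}$ of eigenvalues in the open right half-plane (the left half-plane count is then $n-N_{cl}$). Thus the statement reduces to showing $N_{cl}(\xi_1)=N_{cl}(\xi_2)$ if and only if $\beta(\xi_1)=\beta(\xi_2)$.

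First I would set up the algebraic bridge between the open and the closed loop. Applying the matrix determinant lemma to the rank-one update $BC$ in $A_{cl}=A+BC$ (Proposition~\ref{prop:closed-loop-jacobian}) together with the definition of $G$ in \eqref{eq:transfer-function} yields the identity $\det(sI-A_{cl}(\xi))=\det(sI-A(\xi))\,(1-G(\xi,s))$. Reading $1-G(\xi,s)=\det(sI-A_{cl}(\xi))/\det(sI-A(\xi))$ as a ratio of the two characteristic polynomials, its right half-plane zeros are the right half-plane eigenvalues of $A_{cl}(\xi)$, i.e. $N_{cl}(\xi)$, and its right half-plane poles are those of $A(\xi)$, denoted $N_{ol}(\xi)$; modes common to both factors cancel and contribute equally, so only the difference $N_{cl}(\xi)-N_{ol}(\xi)$ enters what follows.

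Next I would apply the argument principle along the Nyquist contour $\Gamma$, namely the imaginary axis closed by a large semicircle into the right half-plane. Assumption (A1) guarantees that $G$, hence $1-G$, has no pole or zero on the imaginary axis, and hyperbolicity excludes imaginary eigenvalues of $A_{cl}$, so $\Gamma$ meets no singularity of $1-G(\xi,\cdot)$ and the relevant winding numbers are well defined. Applying the argument principle to $1-G(\xi,\cdot)$ and noting that the reflection $w\mapsto 1-w$ carries the base point $0$ to $1$ while preserving winding numbers, I obtain $wn(G(\xi,\Gamma),1)=\sigma\,(N_{cl}(\xi)-N_{ol}(\xi))$ with an orientation sign $\sigma\in\{+1,-1\}$ that is fixed once $\Gamma$ is oriented and hence identical for $\xi_1$ and $\xi_2$. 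Under (A1) the right half-plane poles of $G$ keep a constant count along the connected manifold $\mathcal{M}$, since they cannot cross the imaginary axis and the pole degree is constant, so $N_{ol}(\xi)$ is constant; therefore $N_{cl}(\xi_1)=N_{cl}(\xi_2)$ holds if and only if $wn(G(\xi_1,\Gamma),1)=wn(G(\xi_2,\Gamma),1)$.

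Finally I would invoke Lemma~\ref{lem:same-winding-number}, which gives $\big|wn(G(\xi_1,\Gamma),1)-wn(G(\xi_2,\Gamma),1)\big|=\big|\beta(\xi_1)-\beta(\xi_2)\big|$. The two winding numbers coincide precisely when the left-hand side vanishes, i.e. precisely when $\beta(\xi_1)=\beta(\xi_2)$, which closes the chain of equivalences. The main obstacle I anticipate is the bookkeeping in the argument-principle step: fixing the orientation sign $\sigma$ consistently, justifying closure of the contour at infinity, and above all arguing that the open-loop right half-plane count $N_{ol}(\xi)$ is genuinely constant over $\mathcal{M}$. This is where assumption (A1) and the connectedness of $\mathcal{M}$ do the essential work, and where one must be careful that uncontrollable or unobservable modes of the loop-breaking realization do not spoil the correspondence between eigenvalues of $A$ and poles of $G$; once this is settled, Lemma~\ref{lem:same-winding-number} makes the combinatorial last step automatic.
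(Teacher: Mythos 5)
Your proposal is correct and follows essentially the same route as the paper's own proof: reduce topological equivalence to equality of right half-plane eigenvalue counts, use the determinant identity $\det(sI-A_{cl})=\det(sI-A)\,(1-G(\xi,s))$, apply the argument principle on the Nyquist contour to get $wn(G(\xi,\Gamma),1)=n_{cl}(\xi)-n_{ol}(\xi)$, fix the open-loop count via (A1), and finish with Lemma~\ref{lem:same-winding-number}. If anything, you are slightly more careful than the paper (which writes $G-1$ where $1-G$ is meant, and asserts $n_{ol}(\xi_1)=n_{ol}(\xi_2)$ without the continuity-on-connected-$\mathcal{M}$ argument you spell out), but the structure and key steps are identical.
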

The proof is given in the appendix.

\subsection{Existence of marginally stable equilibria}

Let us now turn to the problem of how to find parameter values for which a change 
in stability properties of equilibria can happen.
This is equivalent to searching for critical points $\xi_c$ at which the Jacobian
$\frac{\partial F}{\partial x}(\xi_c)$ has an eigenvalue on the imaginary axis.
This typically means
that $\xi_c$ is part of a submanifold of $\mathcal{M}$ that separates regions
of topological equivalence, 
and in view of Theorem~\ref{theo:topological-equivalence} there are
typically equilibrium--parameter pairs $\xi_1$ and $\xi_2$ close to $\xi_c$ such
that $\beta(\xi_1) \neq \beta(\xi_2)$.
Equivalently, for a specific critical frequency $\omega_c^i$, the transfer
function value $G(\xi,j\omega_c^i(\xi))$ has to cross the value $1$ when
$\xi$ varies continuously along a path from $\xi_1$ to $\xi_2$.
These observations are formalised in the following theorem.

\begin{theorem}
\label{theo:crit-param-existence}
Assume that (A1) and (A2) are satisfied.
There exists a critical point $\xi_c\in\mathcal{M}$ such that 
$\frac{\partial F}{\partial x}(\xi_c)$ has an eigenvalue on the imaginary axis,
if and only if there exist $\xi_1, \xi_2 \in\mathcal{M}$ such that,
for some $i\in\lbrace 1,2,\ldots,\tilde\alpha\rbrace$,
\begin{equation}
\label{eq:crit-param-existence}
\begin{aligned}
G(\xi_1,j\omega_c^i(\xi_1)) \leq 1 \leq G(\xi_2,j\omega_c^i(\xi_2)),
\end{aligned}
\end{equation}
where $\omega_c^i(\xi) \in \Omega_c(\xi)$.
In that case, $\pm j\omega_c^i(\xi_c)$ is an eigenvalue of $\frac{\partial F}{\partial x}(\xi_c)$.
\end{theorem}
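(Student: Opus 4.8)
The plan is to prove both implications by first using Proposition~\ref{prop:closed-loop-jacobian} to identify $\frac{\partial F}{\partial x}(\xi)$ with $A_{cl}(\xi)$, then reducing the existence of an eigenvalue of $A_{cl}(\xi)$ on the imaginary axis to the scalar condition $G(\xi,j\omega_c^i(\xi))=1$ by means of Lemma~\ref{lem:eigenvalues}, and finally invoking the intermediate value theorem along a path in the connected manifold $\mathcal{M}$. The central object will be the real-valued function $g_i(\xi)=G(\xi,j\omega_c^i(\xi))$ attached to a fixed branch index $i$.

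For necessity I would start from a critical point $\xi_c\in\mathcal{M}$ with $j\omega_0$ an eigenvalue of $\frac{\partial F}{\partial x}(\xi_c)=A_{cl}(\xi_c)$. Assumption (A1) forbids poles of $G(\xi_c,\cdot)$ on the imaginary axis, so $j\omega_0$ is not an eigenvalue of $A(\xi_c)$; hence case~(ii) of Lemma~\ref{lem:eigenvalues} is excluded and case~(i) forces $G(\xi_c,j\omega_0)=1$. In particular $G(\xi_c,j\omega_0)\in\Real$, so $\omega_0\in\Omega_c(\xi_c)$, and by (A2) together with the labeling in~\eqref{eq:minimal-set-crit-freq} we have $\omega_0=\omega_c^i(\xi_c)$ for some $i\in\lbrace 1,\ldots,\tilde\alpha\rbrace$. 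Taking $\xi_1=\xi_2=\xi_c$ then realises~\eqref{eq:crit-param-existence} with equalities throughout, which establishes this direction.

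For sufficiency I would fix the index $i$ furnished by~\eqref{eq:crit-param-existence} and argue that $g_i$ is continuous on $\mathcal{M}$: under (A2) the ordered branch $\omega_c^i$ is a continuous function of $\xi$, and (A1) guarantees that $(\xi,\omega)\mapsto G(\xi,j\omega)$ has no pole on the imaginary axis and is therefore continuous, so the composition $g_i$ is continuous and real-valued. Since $\mathcal{M}$ is a smooth connected manifold it is path-connected, so I may choose a continuous path $\gamma\colon[0,1]\to\mathcal{M}$ with $\gamma(0)=\xi_1$, $\gamma(1)=\xi_2$. The hypothesis reads $g_i(\gamma(0))\leq 1\leq g_i(\gamma(1))$, and the intermediate value theorem applied to $g_i\circ\gamma$ yields $t^\ast$ with $g_i(\gamma(t^\ast))=1$. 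Setting $\xi_c=\gamma(t^\ast)$ gives $G(\xi_c,j\omega_c^i(\xi_c))=1$; invoking (A1) once more to exclude $j\omega_c^i(\xi_c)$ from the spectrum of $A(\xi_c)$, Lemma~\ref{lem:eigenvalues}(i) shows $j\omega_c^i(\xi_c)$ is an eigenvalue of $A_{cl}(\xi_c)=\frac{\partial F}{\partial x}(\xi_c)$. Because this matrix is real, $\pm j\omega_c^i(\xi_c)$ are both eigenvalues, which also delivers the final assertion.

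I expect the main obstacle to be justifying that $\omega_c^i$ is a globally well-defined, single-valued, continuous branch so that $g_i$ and the label $i$ make sense simultaneously at $\xi_1$ and $\xi_2$. This rests squarely on (A2): minimality keeps $\operatorname{card}\Omega_c(\xi)=\tilde\alpha$ constant, so as $\xi$ varies the roots of the odd polynomial in~\eqref{eq:wcrit-computation} neither collide nor escape, and the ordered labeling $\omega_c^1(\xi)<\cdots<\omega_c^{\tilde\alpha}(\xi)$ therefore defines $\tilde\alpha$ continuous, non-crossing branches over all of the connected manifold $\mathcal{M}$. Once this labeling is in hand, the two remaining ingredients — the spectral translation through Lemma~\ref{lem:eigenvalues} and the intermediate value argument — are routine.
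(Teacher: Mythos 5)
Your proof is correct and takes essentially the same route as the paper's: both reduce the eigenvalue condition to the scalar equation $G(\xi,j\omega_c^i(\xi))=1$ via Lemma~\ref{lem:eigenvalues}, settle necessity trivially by taking $\xi_1=\xi_2=\xi_c$, and settle sufficiency by continuity of $\xi\mapsto G(\xi,j\omega_c^i(\xi))$ and an intermediate value argument along a path in the connected manifold $\mathcal{M}$. Your write-up simply makes explicit several points the paper leaves implicit, namely the exclusion of case \textit{(ii)} of Lemma~\ref{lem:eigenvalues} through (A1), the global continuity of the ordered branch $\omega_c^i$ under (A2), and the conjugate-pair observation that yields the final assertion about $\pm j\omega_c^i(\xi_c)$.
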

\begin{proof}
By Lemma~\ref{lem:eigenvalues}, our assumptions assure that a point $\xi_c$ is critical if and
only if $G(\xi_c,j\omega_c^i(\xi_c))=1$.

\emph{Necessity.} Under the condition $G(\xi_c,j\omega_c)=1$, take $\xi_1=\xi_2=\xi_c$ and
\eqref{eq:crit-param-existence} follows trivially.

\emph{Sufficiency.} Let $\xi_1$ and $\xi_2$ be such that \eqref{eq:crit-param-existence}
holds. Connectivity of $\mathcal M$ implies that there is a path from $\xi_1$ to $\xi_2$
in $\mathcal M$. Continuity of the critical frequency $\omega_c^i(\xi)$ and the transfer
function coefficients result in continuity of $G(\xi,j\omega_c^i(\xi))$ with respect to
$\xi$. 
This implies existence of $\xi_c$ such that $G(\xi_c,j\omega_c^i(\xi_c))=1$ 
along any path from $\xi_1$ to $\xi_2$.
\end{proof}

The proof shows that the critical point $\xi_c$ is usually far from unique.
It may be unique if $\mathcal{M}$ is of dimension one, i.e. there is only one free
parameter to vary.
In general, one will expect that there is a submanifold of critical points in
$\mathcal{M}$ separating regions which represent different topological equivalence classes,
where $\xi_1$ is an element of one such class, and $\xi_2$ is an element of the other class.
On this submanifold, the bifurcations that can be encountered generically are
codimension one bifurcations.
Therefore the bifurcation condition provided by Theorem~\ref{theo:crit-param-existence}
is mainly useful in the search for codimension one bifurcations, although
condition \eqref{eq:crit-param-existence} also holds for
bifurcations of higher codimension. 

Also note that the assumptions (A1) and (A2) are sufficient, but not necessary
in Theorem~\ref{theo:crit-param-existence}.
For example, (A2) may be violated, but if the additional critical frequencies
do not lead to a change in the winding number of the transfer function graph
around the point $1$, then the conclusion is still valid.

Using Theorem~\ref{theo:crit-param-existence}, it is easily possible to distinguish
dynamical bifurcations from static bifurcations.
In fact, if $i$ is chosen such that the critical frequency $\omega_c^i(\xi) = 0$ 
is considered, $A_{cl}(\xi_c)$ has a
zero eigenvalue, which generically corresponds to a saddle-node bifurcation.
If a critical frequency $\omega_c^i(\xi) \neq 0$ is considered, $A_{cl}(\xi_c)$ has
conjugated imaginary eigenvalues, and one will generically get a Hopf bifurcation.

A graphical illustration of Theorem~\ref{theo:crit-param-existence} is given
in Fig.~\ref{fig:nyquist}.
The relation to the classical Nyquist stability criterion also becomes
clear from this figure.

\begin{figure}
\begin{center}
\includegraphics[width=0.6\textwidth]{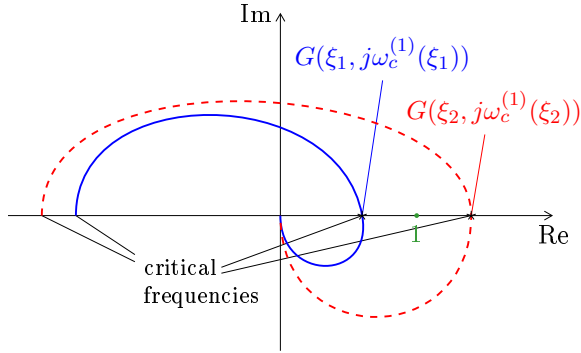}
\end{center}
\caption{Illustration of Theorem~\ref{theo:crit-param-existence} in the Nyquist plot.
Full line: $G(\xi_1,j\omega)$, dashed line: $G(\xi_2,j\omega)$, both for $\omega \geq 0$.
The theorem asserts existence of a Hopf bifurcation on any path between $\xi_1$
and $\xi_2$.}
\label{fig:nyquist}
\end{figure} 

\subsection{An algorithm for a numerical parameter search}
\label{ssec:numerical-parameter-search}

In this section, we discuss an algorithm to search for parameter values
that will lead to a change in stability properties of an equilibrium point.
We assume that a starting parameter $p_1$ and a corresponding equilibrium
$\bar x_1$ are known, which we combine in the pair $\xi_1 = (\bar x_1, p_1)
\in \mathcal M$.
It is reasonable to assume that the pair $\xi_1$ is not critical, otherwise
it is usually straightforward to find parameter values yielding equilibrium points
with different stability properties.
Moreover, the manifold $\mathcal M$ is assumed to be defined by a nonlinear
equation of the form $\varphi(\xi) = 0$.
Often, one can directly use $\varphi=F$, but sometimes a
modification is useful to exclude some solutions if the equation $F(x,p)=0$
is known to have multiple solutions.
The aim of the algorithm is to find an equilibrium--parameter pair $\xi_2 \in \mathcal{M}$
such that $\xi_2$ is not topologically equivalent to $\xi_1$.
The main theoretical basis of the algorithm is the result of 
Theorem~\ref{theo:crit-param-existence}.
Thus it is also possible to search specifically for either static or
dynamic bifurcations on a path from $\xi_1$ to $\xi_2$ by choosing
an appropriate critical frequency.

In order to put the problem in the framework developed in this paper, a
loopbreaking for the system~\eqref{eq:closed-loop-system} has to
be defined.
Then, by looking at the resulting transfer function $G(\xi_1,s)$,
possible changes in stability properties can be determined.
In particular one has to decide whether to search for a static or for a
dynamic bifurcation.
This leads to the choice of a critical frequency $\omega_c^i$ which
is to be considered in the algorithm.

Denote the transfer function value for the critical frequency
at a point $\xi$ as $\gamma(\xi)$.
At the starting point $\xi_1$, this value can be
computed as 
\begin{equation}
\begin{aligned}
\gamma(\xi_1) = G(\xi_1,j\omega_c^i(\xi_1)).
\end{aligned}
\end{equation}
Note that an analytical expression of the function $\gamma$ can be
derived directly, maybe with the support of computer algebra for more complex systems.
This derivation requires only basic algebraic manipulations, differentiation and
matrix inversion, which can all be done symbolically for typical system classes.
In particular, it is not required to have an analytical solution of the
equation $F(x,p) = 0$ to construct $\gamma$.

Now two cases have to be distinguished.
\begin{enumerate}
\item If $\gamma(\xi_1) < 1$, the algorithm searches a pair $\xi_2 \in \mathcal M$ such
that $\gamma(\xi_2) > 1$.
\item If $\gamma(\xi_1) > 1$, the algorithm searches $\xi_2 \in \mathcal M$ such
that $\gamma(\xi_2) < 1$.
\end{enumerate}

The algorithm we are using is best described by the term
\emph{gradient-directed continuation method}. 
Continuation methods \citep{RichterDeC1983}
are popular in numerical bifurcation analysis, where they are used to
trace the equilibrium curve in the combined state-parameter space.
In our algorithm, continuation is used to stay on the manifold
$\mathcal{M}$.
However, a continuation method alone is not sufficient, as
$\mathcal{M}$ is $m$-dimensional with typically $m>1$.
Thus, the continuation is complemented with a gradient ascent or descent
approach to achieve the desired value for $\gamma(\xi_2)$.

Since the algorithm is based on Theorem~\ref{theo:crit-param-existence},
assumptions (A1) and (A2) need to be checked.
Depending on the system under consideration, this may be a difficult problem
globally over the equilibrium-parameter manifold $\mathcal M$.
However, for the validity of the algorithm's results it is sufficient
that (A1) and (A2) are satisfied locally along the path used for the continuation.
These checks can be directly included in the algorithm.
If the assumptions are violated at one point, the algorithm issues a
warning message.
The results may still be valid, because (A1) and (A2) are only
sufficient, but not necessary conditions for Theorem~\ref{theo:crit-param-existence}.
However, the results need to be checked separately in this case.

In detail, the algorithm works as follows. We are discussing
case 1 only, small extensions are required for dealing with
both cases.
\begin{enumerate}
\item \textbf{Initialisation.} Set $\xi^{(0)} = \xi_1$.
Choose numerical parameters: $\Delta\gamma$ for the minimal required change in $\gamma(\xi)$ per
iteration, $\delta^{(0)}$ as the initial step size and $\delta_{min}$ ($\delta_{max}$)
as minimal (maximal) step size.
\item \textbf{Checking assumptions.}
(A1) is checked locally by computing the poles and zeros of $G(\xi^{(i)},s)$.
(A2) is checked locally by computing the critical frequencies $\Omega_c(\xi^{(i)})$
and applying Proposition~\ref{prop:minimality}.
If the assumptions are not satisfied, output a warning.
\item \textbf{Prediction step.} This step assures the desired
increase in $\gamma(\xi)$.
\begin{enumerate}
\item Compute the gradient $\nabla\gamma\left(\xi^{(i)}\right)$.
\item Compute the subspace which is tangent to $\mathcal{M}$ in
the point $\xi^{(i)}$:
\begin{equation}
\begin{aligned}
T_{\xi^{(i)}}\mathcal{M} = \operatorname{null}\frac{\partial\varphi}{\partial \xi}\left(\xi^{(i)}\right).
\end{aligned}
\end{equation}
\item Project $\nabla\gamma\left(\xi^{(i)}\right)$ on $T_{\xi^{(i)}}\mathcal{M}$:
\begin{equation}
\label{eq:gradient-projection}
\begin{aligned}
v^{(i)} = \operatorname{Proj}\left(\nabla\gamma\left(\xi^{(i)}\right),T_{\xi^{(i)}}\mathcal{M}\right).
\end{aligned}
\end{equation}
\item Set the predicted point 
\begin{equation}
\begin{aligned}
\xi^{(i+1)}_{pre} = \xi^{(i)} + \delta^{(i)} v^{(i)}.
\end{aligned}
\end{equation}
Step size control is used in the sense that $\delta^{(i)}$ is varied to assure
that the condition
\begin{equation}
\label{eq:gamma-increase}
\begin{aligned}
\gamma\left(\xi^{(i+1)}_{pre}\right) - \gamma\left(\xi^{(i)}\right) \geq \Delta\gamma
\end{aligned}
\end{equation}
is satisfied, while keeping $\delta_{min} \leq \delta^{(i)} \leq \delta_{max}$.
\end{enumerate}
\item \textbf{Correction step.} Generally, $\xi^{(i+1)}_{pre} \notin \mathcal{M}$, so
a correction step is required to achieve $\xi^{(i+1)} \in \mathcal{M}$.
To this end, the Gauss-Newton method is used to solve the nonlinear equation
\begin{equation}
\begin{aligned}
\varphi(\xi^{(i+1)}) &= 0 \\
\gamma\left(\xi^{(i+1)}\right)  &= \gamma\left(\xi^{(i+1)}_{pre}\right)
\end{aligned}
\end{equation}
for $\xi^{(i+1)}$, where $\xi^{(i+1)}_{pre}$ is used as starting point for the Gauss-Newton algorithm.
If the Gauss-Newton algorithm converges, the algorithm takes the solution as value for $\xi^{(i+1)}$ 
and proceeds to the next step.
Otherwise, the algorithm reduces the step size $\delta^{(i)}$ and goes back to 2d).
\item \textbf{Finishing criterion.} Compute $\gamma\left(\xi^{(i+1)}\right)$. If
$\gamma\left(\xi^{(i+1)}\right)> 1$, finish successfully, otherwise iterate
to step 2.
\end{enumerate}

If the algorithm finishes successfully, it does so in a finite
number of steps with a previously known upper bound 
due to step size control via inequality~\eqref{eq:gamma-increase}.

However, in the same way as classical continuation methods, the algorithm may
fail if the Gauss-Newton algorithm in step 3 does not converge, and the
step size $\delta^{(i)}$ may not be reduced further due to the constraint
$\delta_{min} \leq \delta^{(i)}$ at the same time.
This problem may appear if the system is numerically ill-conditioned, 
but can typically be avoided by choosing a smaller value for either $\Delta\gamma$
or for $\delta_{min}$, with the drawback of increased computational effort.
Also, it can in general not be excluded that the function $\gamma(\xi)$ 
has local extrema, which may pose problems to the algorithm.
Such problem may be detected numerically from the vector $v^{(i)}$ taking very
small values.
However, in several applications we have not encountered this problem so far.

The algorithm as described above does not consider constraints on the parameters $p$. 
Such constraints can be included by slight modifications in steps 2 and 3.
If the border of the set $\mathcal{P}$ is approached during the iteration, the modified
algorithm projects the gradient $\nabla\gamma\left(\xi^{(i)}\right)$ on the intersection
of the tangent to $\mathcal{M}$ and the tangent to the border of $\mathcal{P}$.
With additional step size control, a constraint violation is then avoided.

\subsection{Discussion of the feedback loop approach}

Two key steps in the approach we have taken are the transformation of the
problem to the frequency domain and the consideration of the critical frequencies.
These steps require an elaborate setup and therefore need to be well justified.

Approaching the given problem in the time domain would typically require to
deal with the eigenvalues of the Jacobian $\frac{\partial F}{\partial x}$ on the
considered manifold of equilibrium points.
In particular, it would require to consider how the eigenvalues change
if the parameters change.
Going to the frequency domain will typically reduce the number of variables
that are to be tracked with changes, because there are typically less
critical frequencies than eigenvalues.
For a minimum phase system, the number of critical frequencies is not more than the
relative degree of the transfer function $G(\xi,s)$.
Moreover, the position of eigenvalues has to be tracked in the
two dimensions of the complex plane, whereas the transfer function values at critical
frequencies are always real numbers.
In particular, it would be difficult to estimate from the eigenvalues of the system for some starting
parameters, which pair of eigenvalues should be pushed to the imaginary axis in order
to obtain a Hopf bifurcation.
Using the frequency domain approach, the critical frequency for which the transfer function
value should be pushed towards 1 can typically be determined easily.

% Third, in the frequency domain one can distinguish better between static bifurcations, where the relevant critical
% frequency is equal to zero, and dynamic bifurcations with nonzero critical frequencies.
% With eigenvalues, one would have to consider the imaginary part in addition to bringing
% the real part to zero.

In classical bifurcation analysis, so called bifurcation test functions are used
to check whether a bifurcation may occur when going from one parameter 
value to another one \citep{Kuznetsov1995}.
The test function $\Psi$ is defined such that $\Psi(\xi_c) = 0$ if the bifurcation
that is tested for occurs at $\xi_c$.
Bifurcations are detected by the test function $\Psi(\xi)$ changing
sign when going from one point to the other, i.e.\ if $\Psi(\xi_1)\Psi(\xi_2) < 0$,
then a bifurcation occurs between $\xi_1$ and $\xi_2$.
For bifurcations of codimension one, suitable test functions are known and are
routinely used in numerical continuation algorithms.
Note that in the frequency domain approach, the expression $G(\xi,j\omega_c(\xi))-1$
is a test function for a generic saddle-node bifurcation, if we consider $\omega_c = 0$,
and it is a test function for a generic Hopf bifurcation when considering $\omega_c \neq 0$.
Computing classical test functions for a given point $\xi$ requires a similar or slightly less
computational effort as computing the transfer function values at the critical frequency.
So we need to justify why we do not use classical test functions for bifurcation
search in a high-dimensional parameter space.

Because classical continuation methods cannot be used in a high-dimensional parameter space,
one has to look for different approaches.
A naive approach to find parameters for a bifurcation would be to solve directly
the equations
\begin{equation}
\label{eq:naive-search}
\begin{aligned}
\Psi(\xi) &= 0 \\
F(\xi) &= 0.
\end{aligned}
\end{equation}
However, in most cases this will be numerically infeasible with classical test
functions, even if the combined parameter state space is of very low dimension.
A more sophisticated approach could use basically the same algorithm that we have
presented in Section~\ref{ssec:numerical-parameter-search}, the gradient-directed
continuation method, and just use the gradient of a classical bifurcation test function instead
of the gradient of the transfer function $G(\xi,j\omega_c(\xi))$.
We have also implemented this approach for several examples, but run into numerical problems
for any system of medium complexity.
In particular, the analysis presented in the next section did not work with a classical
bifurcation test function for a Hopf bifurcation due to numerical problems.
These problems seems to be related to our observation that the value of the classical bifurcation test
function seems to be numerically much less well behaved with respect to parameter variations
than the transfer function value at critical frequencies.
% This observation is also supported by the fact that general optimisation methods
% can indeed find suitable parameter values for a bifurcation by directly solving
% \eqref{eq:naive-search}, if the expression $G(\xi,j\omega_c(\xi))-1$ is used as a 
% test function \citep{WaldherrAll2007a}.

% - moreover, using bifurcation test functions would not give insight into the feedback
% circuit structure that is responsible for changes in stability.

\section{Application to a biological signalling system}
\label{sec:application}

In this section, we apply the theoretical results and the numerical algorithm described
in the previous section to a system for biochemical signal transduction. 
A central element of the signal transduction in eukaryotic cells is the mitogen 
activated protein kinase (MAPK) cascade. 
It appears in several signalling pathways and is related to cell differentiation,
proliferation, and response to external stress.
Several ODE models for this system have been proposed during the last decade \citep{OrtonStu2005}.
% Since the system is relatively simple and still can produce interesting dynamics, it
% is a perfect candidate to illustrate theoretical analysis methods for biological systems
% like the one discussed in this paper.

The MAPK cascade consists of three layers of kinase proteins, where each kinase activates
the next layer, and the last layer corresponds to the output of the cascade.
We consider the MAPK cascade as it appears in the EGF (epidermal growth factor) receptor pathway
\citep{BrightmanFel2000}. There, the three kinases in the order how activation
proceeds are Raf, MEK (MAPK/ERK kinase), and ERK (extracellular-signal-regulated kinase).
Active ERK phosphorylates and inhibits SOS (son of sevenless homologue), which is required in the activation
of Raf \citep{BrightmanFel2000}. This constitutes a negative feedback loop
in the system.
The model we use here is a slight simplification of a model suggested by \cite{Kholodenko2000}, and
it is also a subsystem of the EGF pathway as modelled by \cite{BrightmanFel2000}.
A cartoon of the biochemical reactions incorporated in the model is shown in Fig.~\ref{fig:mapk-model}.

\begin{figure}[tb]
\begin{center}
\includegraphics[width=12cm]{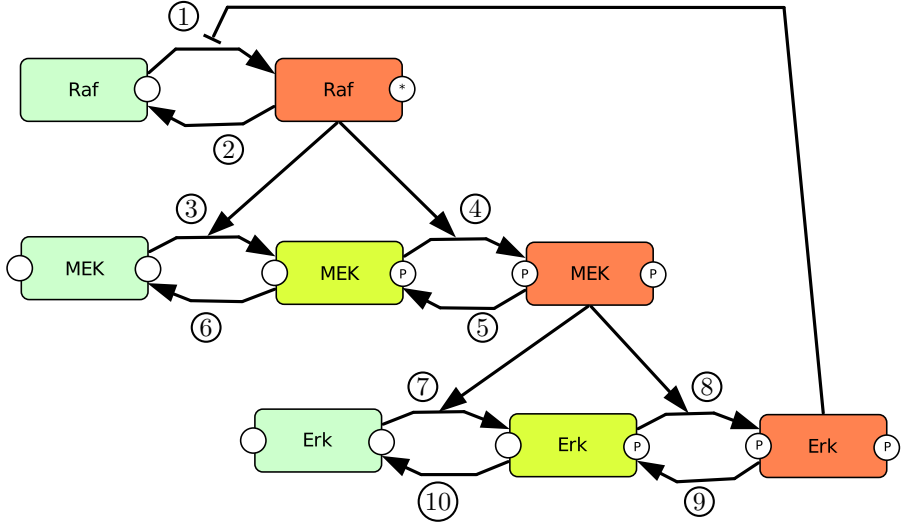}
\end{center}
\vspace{0.3cm}
\caption{Illustration of the MAPK cascade model with reaction numbers.}
\label{fig:mapk-model}
\end{figure} 

In the equations, the concentrations of phosphorylated kinases are denoted as 
$x_{11}=$ [Raf*], $x_{21}=$ [MEK-P], $x_{22}=$ [MEK-PP],
$x_{31}=$ [ERK-P] and $x_{32}=$ [ERK-PP]. The concentrations of unphosphorylated inactive kinases
Raf, MEK and ERK need not be included as state variables, as they can be computed via the conservation
laws
\begin{equation*}
\begin{aligned}
\textnormal{[Raf]} + x_{11} &= x_{1t} \\
\textnormal{[MEK]} + x_{21} + x_{22} &= x_{2t} \\
\textnormal{[ERK]} + x_{31} + x_{32} &= x_{3t},
\end{aligned}
\end{equation*}
where $x_{1t}$, $x_{2t}$ and $x_{3t}$ are parameters for the total concentrations of kinases,
which are constant.
Table~\ref{tab:mapk-reaction-rates} shows the mathematical expressions for the reaction rates,
with numbers corresponding to the labels in Fig.~\ref{fig:mapk-model}.
Nominal parameter values for the simplified model have been adopted from \citep{Kholodenko2000},
and are shown in Table~\ref{tab:mapk-parameters} as $p_1$.

\begin{table}
% \renewcommand{\arraystretch}{1.2}
% \renewcommand{\extrarowheight}{0.5cm}
% \begin{center}
\caption{Reaction rates in the MAPK cascade model}
\begin{tabular}{|c|c|}\hline
Reaction & Rate \\ \hline
$v_1$ & $V_1\frac{x_{1t}-x_{11}}{(1+x_{32}/K_{i})(K_{m1}+x_{1t}-x_{11})}$ \\[0.1cm] \hline
$v_2$ & $V_2\frac{x_{11}}{K_{m2}+x_{11}}$ \\[0.1cm] \hline
$v_3$ & $k_3 x_{11}(x_{2t}-x_{21}-x_{22})$ \\[0.1cm] \hline
$v_4$ & $k_4 x_{11} x_{21}$ \\[0.1cm] \hline
$v_5$ & $V_5 \frac{x_{22}}{K_{m5}+x_{22}}$ \\[0.1cm] \hline
$v_6$ & $V_6 \frac{x_{21}}{K_{m6}+x_{21}}$ \\[0.1cm] \hline
$v_7$ & $k_7 x_{22} (x_{3t}-x_{31}-x_{32})$ \\[0.1cm] \hline
$v_8$ & $k_8 x_{22} x_{31}$ \\[0.1cm] \hline
$v_9$ & $V_9 \frac{x_{32}}{K_{m9}+x_{32}}$ \\[0.1cm] \hline
$v_{10}$ & $V_{10}\frac{x_{31}}{K_{m10}+x_{31}}$ \\[0.1cm] \hline
\end{tabular} 
% \end{center}
\label{tab:mapk-reaction-rates}
\end{table}

Using the reaction rates from Table~\ref{tab:mapk-reaction-rates}, the model
can be written as a system of five ODEs with 20 parameters:
\begin{equation}
\label{eq:mapk-model}
\begin{aligned}
\dot x_{11} &= v_1 - v_2 \\
\dot x_{21} &= v_3 + v_5 - v_4 - v_6 \\
\dot x_{22} &= v_4 - v_5 \\
\dot x_{31} &= v_7 +  v_9 - v_8 - v_{10} \\
\dot x_{32} &= v_8 - v_9.
\end{aligned}
\end{equation}

The only difference to Kholodenko's model is in the phosphorylation reactions
3, 4, 7 and 8. 
The original model uses Michaelis-Menten kinetics for these reactions,
whereas our simplified model uses mass action kinetics.
It can be argued that with the concentrations of all kinases being on the same order of
magnitude, the assumptions for using Michaelis-Menten kinetics in reactions 3, 4, 7 and
8 are not valid anyway, and one could aim to achieve a similar dynamical behaviour with the
simpler model structure where mass action kinetics are used.

Kholodenko has shown in simulations that the system can show limit cycle oscillations
for some parameter values. Due to the simplifications in four reaction rates, the model~\eqref{eq:mapk-model}
does not oscillate for nominal parameter values $p_1$.
Instead, the model has a stable equilibrium $\bar x_1$ for these parameter values.
Solutions of the model converge to the steady state within $20$ seconds, as depicted in Fig.~\ref{fig:mapk-solutions-p0}.

The question we deal with is whether parameters can be changed such that also
the simplified model shows sustained oscillations.
To answer this question, we apply the algorithm to search for
critical parameter values which is described in the previous section.

The first step in our analysis is to choose a suitable loop breaking. For the given system, an intuitive
approach is to break the loop at the feedback inhibition of reaction $v_1$ by ERK-PP.
Thus we choose $h(x) = x_{32}$ and replace $x_{32}$
by the input $u$ in the reaction rate $v_1$, thus obtaining the dynamics of the open loop system $f(x,u,p)$.

A linearisation of the open loop system
around the equilibrium point and a Laplace transformation give the transfer function 
$G(\xi,s)$, whose graph is shown in Fig.~\ref{fig:mapk-nyquist}.
The problem is now to find parameters $p_2$ with a corresponding unstable equilibrium point $\bar x_2$.
This can be done using the numerical algorithm presented in Section~\ref{ssec:numerical-parameter-search}.

The set of critical frequencies is minimal with $\alpha=3$, which can be seen from Fig.~\ref{fig:mapk-nyquist} by
the observation that the graph of $G(\xi_1,j\omega)$ encircles the origin monotonically and crosses
the real axis three times. The origin of the complex plane is not counted as 
crossing, as we have $\omega=\pm\infty$ there.
The only positive critical frequency is $\omega_c^3(\xi_1) = 0.017\,s^{-1}$, and we will
consider only $\omega_c^3$ in the search for destabilising parameters, because our
goal is to find a Hopf bifurcation.
The corresponding transfer function value
is $G(\xi_1,j\omega_c^3(\xi_1)) = \gamma(\xi_1) = 0.12$, corresponding to the equilibrium $\bar x_1$ being stable
in the closed loop system.

The goal for the parameter search algorithm is to find parameters such that
$\gamma(\xi_2) > 1$.
Then the corresponding equilibrium point $\bar{x}_2$ will not be topologically equivalent
to the nominal equilibrium $\bar x_1$ and we can expect a Hopf bifurcation when
varying parameters from the nominal value $p_1$ to the new value $p_2$.
To ensure that the algorithm does not stop at the bifurcation point, but continues to vary
parameters until the oscillations have reached a considerable amplitude, we try to achieve
$\gamma(\xi_2) \geq 1.5$ in the implementation used here.

% To improve the numerical well-behavedness of the MAPK model~\eqref{eq:mapk-model},
% a time scaling of a factor 100 was done before applying the bifurcation search
% algorithm. 
% This reduces the ratio of the largest to the smallest parameter value
% by the same factor and thus improves numerical performance significantly.
% The state values and stability properties of the equilibrium point are not
% influenced by the time scaling.

In the application of the algorithm to this problem, we set the minimal change
in the transfer function value per iteration $\Delta\gamma=10^{-4}$ and
the initial step size $\delta^{(0)} = 10^{-4}$.
The Gauss-Newton algorithm in the correction step was constrained to 20 iterations,
but in the step size control the step size $\delta^{(i)}$ was already decreased 
if the Gauss-Newton algorithm required five or more iterations for convergence.
With these settings the algorithm finishes successfully after 276 iterations,
yielding the parameters $p_2$ and an equilibrium $\bar x_2$ with the transfer function value
$G(\xi_2,j\omega_c^3(\xi_2)) = 1.52$ and the critical frequency $\omega_c^3(\xi_2) = 0.0068\,s^{-1}$,
where $\xi_2 = (\bar x_2, p_2)$.
The parameter values in $p_2$ are shown in Table~\ref{tab:mapk-parameters}.
Note that although in principle all parameters could have been changed when going
from $p_1$ to $p_2$, the algorithm varies only 9 out of the 20 parameters by an
amount of more than 20 \%. Since the algorithm uses the gradient of the
transfer function value at the critical frequency, we can presume that the parameters
that have been varied by a larger amount
have higher influence on existence of oscillations than the other parameters.

\begin{table}[htb]
% \begin{center}
\caption{Reference parameters $p_1$ and parameters for instability $p_2$ in the
MAPK cascade model.}
\begin{tabular}{|c|c|c|c|l|} \hline
Param. & $p_1$ & $p_2$ & Unit & rel. change\\ \hline
   $V_1$ & 2.5 & 2.5 & nM/s & $1.00$\\ \hline
   $K_i$ & 9 & 18.9 & nM & $2.09$\\ \hline
  $K_{m1}$ & 10 & 8.1 & nM & $1.23^{-1}$\\ \hline
   $V_{2}$ & 0.25 & 0.17 & nM/s & $1.43^{-1}$\\ \hline
  $K_{m2}$ & 8 & 0.54 & nM & $14.8^{-1}$\\ \hline
   $k_{3}$ & 0.001 & $4.3\e{-4}$ & 1/(s nM) & $2.34^{-1}$\\ \hline
   $k_{4}$ &  0.001 & $5.7\e{-4}$ & 1/(s nM) & $1.76^{-1}$\\ \hline
   $V_{5}$ & 0.75 & 0.74 & nM/s & $1.01^{-1}$\\ \hline
  $K_{m5}$ & 15 & 7.6 & nM & $1.98^{-1}$\\ \hline
   $V_{6}$ & 0.75 & 0.77 & nM/s & $1.02$\\ \hline
  $K_{m6}$ & 15 & 13.9 & nM & $1.08^{-1}$\\ \hline
   $k_{7}$ & 0.001 & $5.2\e{-4}$ & 1/(s nM) & $1.93^{-1}$\\ \hline
   $k_{8}$ & 0.001 & $7.9\e{-4}$ & 1/(s nM) & $1.27^{-1}$\\ \hline
   $V_{9}$ & 0.5 & 0.49 & nM/s & $1.02^{-1}$\\ \hline
  $K_{m9}$ & 15 & 15.1 & nM & $1.01$\\ \hline
  $V_{10}$ & 0.5 & 0.51 & nM/s & $1.01$\\ \hline
 $K_{m10}$ & 15 & 15.4 & nM & $1.03$\\ \hline
  $x_{1t}$ & 100 & 100.2 & nM & $1.00$\\ \hline
  $x_{2t}$ & 300 & 300.2 & nM & $1.00$\\ \hline
  $x_{3t}$ & 300 & 304.4 & nM & $1.01$\\ \hline
%     2.5000    2.5062    0.9975    1.0025
%     9.0000   18.8544    0.4773    2.0949
%    10.0000    8.1300    1.2300    0.8130
%     0.2500    0.1747    1.4314    0.6986
%     8.0000    0.5389   14.8446    0.0674
%     0.0010    0.0004    2.3415    0.4271
%     0.0010    0.0006    1.7594    0.5684
%     0.7500    0.7394    1.0144    0.9858
%    15.0000    7.5681    1.9820    0.5045
%     0.7500    0.7658    0.9794    1.0211
%    15.0000   13.9131    1.0781    0.9275
%     0.0010    0.0005    1.9343    0.5170
%     0.0010    0.0008    1.2730    0.7856
%     0.5000    0.4877    1.0252    0.9755
%    15.0000   15.0997    0.9934    1.0066
%     0.5000    0.5051    0.9898    1.0103
%    15.0000   15.4131    0.9732    1.0275
%   100.0000  100.2178    0.9978    1.0022
%   300.0000  300.1524    0.9995    1.0005
%   300.0000  304.3885    0.9856    1.0146
\end{tabular} 
% \end{center}
\label{tab:mapk-parameters}
\end{table}

The graph of $G(\xi_2,j\omega)$ is shown in Figure~\ref{fig:mapk-nyquist}. For the new parameters $p_2$,
the graph now encircles the point 1. By Theorem~\ref{theo:topological-equivalence}, we see
that the equilibria $\bar x_1$ and $\bar x_2$ are not topologically equivalent. Indeed,
$\bar x_2$ is unstable and the system converges to a limit cycle for parameters
$p_2$. The time course of these oscillations is plotted in Figure~\ref{fig:mapk-solutions-p0}.

\begin{figure}[htb]
\begin{center}
\includegraphics[width=0.65\linewidth]{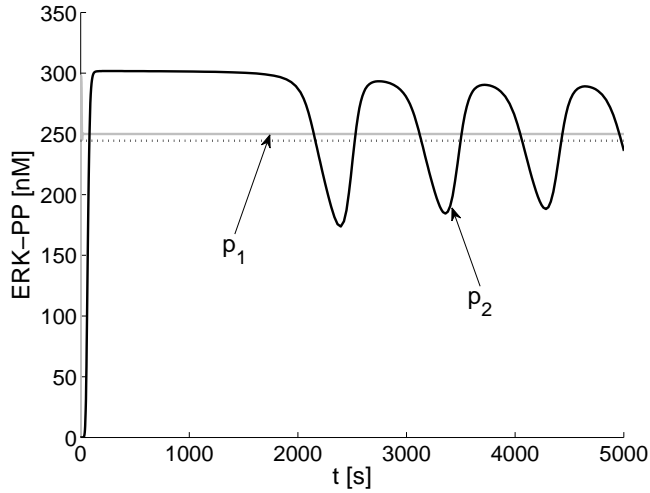}
\end{center}
\caption{Convergence to steady state for parameters $p_1$ (grey line) and sustained
oscillations for parameters $p_2$ (black line). The oscillations coexist with the unstable
equilibrium $\bar x_2$ (dotted line).}
\label{fig:mapk-solutions-p0}
\end{figure} 

\begin{figure}
\begin{center}
\includegraphics[width=\linewidth]{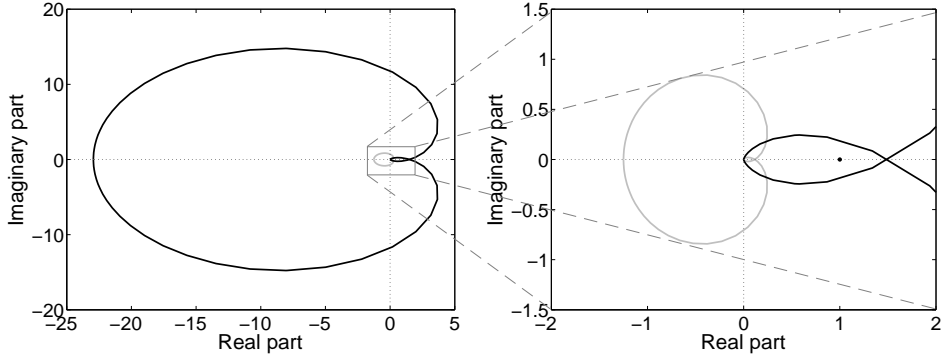}
\end{center}
\caption{Nyquist plots of open--loop MAPK model for parameters $p_1$ (grey line) and $p_2$ (black line).}
\label{fig:mapk-nyquist}
\end{figure}

In conclusion, our method is able to compute parameters which render the corresponding
equilibrium unstable and thus lead to the emergence of sustained oscillations
in the treated example. 
About half of the parameters are
varied by a non--negligible amount, but all variations are within the physiological range, the
highest variation being a factor of about $15$ in the $K_m$-value of one reaction.
It is also worth mentioning that the concentration values in the equilibrium did not change
significantly, although this should not be of physiological relevance for
the unstable equilibrium.

\section{Conclusions}

The loop breaking concept is introduced as a theoretical tool
to analyse complex behaviour in ODE systems as frequently encountered in
mathematical biology. Based on this tool, we present results on topological
equivalence of equilibria in systems with high-dimensional parameter spaces
and on the existence of critical parameters, for which stability properties of equilibria
may change. In addition, an algorithm is given to systematically search for
critical parameters. Using an ODE model for a MAPK cascade, we show that
the algorithm can be used to efficiently search for parameter values leading to limit
cycle oscillations in the system.

Non-uniqueness of critical parameter values is a problem that is inherent to this
kind of analysis. If the dimension of the parameter space is higher than the codimension
of the bifurcation, then there will be a submanifold of bifurcation points in the
parameter space. Our algorithm computes one of these points. Starting from a critical
point thus found, one can then use continuation methods to further
explore the structure of the set of critical points.

Another possibility for further studies would be to search for a bifurcation which is locally closest to
some reference parameter values. A method for this has been presented by \cite{Dobson2003}.
The method requires a bifurcation point where the search is started, 
and we expect our algorithm to give a starting point
which is better suited for the method discussed in \citep{Dobson2003} than a bifurcation 
search along a random line in parameter space.

The biological example we study in Section~\ref{sec:application} is simple in that
it contains only one feedback loop.
For systems with a single feedback loop, the results of the proposed analysis
method are independent of how the loop breaking point is chosen.
Of course many biological systems contain several intertwined feedback loops.
Then, the choice of the loop breaking point needs more attention, because
the results in general depend on this choice.
In our experience, it is often beneficial to try to break several feedback loops
at once.
Also a comparison of different loop breaking points is usually helpful and could
give hints to the role played by individual loops in the dynamical behaviour of
the system.

\section*{Acknowledgements}
We thank Jung-Su Kim and Madalena Chaves for helpful comments on a previous
version of the paper.
SW acknowledges funding by the Stuttgart Research Center for Simulation
Technology through the project ``Dynamical behaviour of complex systems''.

\section*{Appendix}
\addtocounter{section}{1}

% \begin{proof}(\emph{Lemma~\ref{lem:eigenvalues}})
\subsection*{Proof of Lemma~\ref{lem:eigenvalues}}
For simplicity of notation, we drop the dependence on $\xi$ of matrices $A$, $B$ and $C$.
By Schur's lemma, we have
\begin{equation*}
\begin{aligned}
\det(sI-A_{cl}) = \det(sI-A-BC) = \det\matrix{sI-A&-B\\-C&1}.
\end{aligned}
\end{equation*}
Let $(sI-A)_{-i}\in\Real^{(n-1)\times n}$ denote the matrix $(sI-A)$ with the
$i$-th row deleted. Then, by cofactor expansion,
\begin{equation*}
\begin{aligned}
\det\matrix{sI-A&-B\\-C&1} = 1\cdot\det(sI-A)-\sum_{i=1}^n (-1)^{n+1+i} b_i \det\matrix{(sI-A)_{-i}\\-C}.
\end{aligned}
\end{equation*}
In the same way,
\begin{equation*}
\begin{aligned}
\det\matrix{sI-A&-B\\-C&0} = - \sum_{i=1}^n (-1)^{n+1+i} b_i \det\matrix{(sI-A)_{-i}\\-C}
\end{aligned}
\end{equation*}
and with Prop.\ \ref{prop:closed-loop-jacobian} it follows that
\begin{equation}
\label{eq:closed-loop-determinant}
\begin{aligned}
\det(sI-A_{cl}) = \det(sI-A)- \det\matrix{sI-A&-B\\C&0}.
\end{aligned}
\end{equation}

$s_0$ is an eigenvalue of $A_{cl}$ if and only
if $\det(s_0 I-A_{cl}) = 0$.
For condition \textit{(i)}, we have $\det(s_0 I-A)\neq 0$,
and thus the equation
\begin{equation*}
\begin{aligned}
\frac{\det\matrix{s_0 I-A&-B\\C&0}}{\det(s_0 I-A)} = 1
\end{aligned}
\end{equation*}
is equivalent to $s_0$ being an eigenvalue of $A_{cl}$. The claim
then follows from \eqref{eq:transfer-function}.

The other case where $\det(s_0 I-A)= 0$ is considered in condition \textit{(ii)},
and \eqref{eq:closed-loop-determinant} can be used directly to prove the claim.
% \end{proof}

% \begin{proof}(\emph{Proposition~\ref{prop:minimality}})
\subsection*{Proof of Proposition~\ref{prop:minimality}}
Note that $\alpha$ is constant over $\mathcal{M}$ due to assumption (A1).
Consider the transfer function $G(\xi,s)$ for a constant $\xi\in\mathcal{M}$.
For ease of notation, we drop $\xi$ in the transfer function in the following.

It is well known from linear control theory that the argument of $G(j\omega)$ changes
by $\alpha\pi$ when varying $\omega$ from $-\infty$ to $\infty$
\cite{DAzzoHou1975}:
\begin{equation*}
\begin{aligned}
\vert \arg G(j\infty)- \arg G(-j\infty)\vert = \alpha\pi.
\end{aligned}
\end{equation*}
The symmetry $G(j\omega)=\overline{G(-j\omega)}$ implies that $\arg G(j\infty) = - \arg G(-j\infty)$.
From these two facts, it follows that the argument of $G(j\omega)$ spans 
the open interval $I_\alpha = (-\frac{\alpha\pi}{2},\frac{\alpha\pi}{2})$
for $\omega\in(-\infty,\infty)$.

Moreover, by Definition~\ref{def:crit-freq} the condition
$\omega_c\in\Omega_c(\xi)$ is equivalent to
\begin{equation*}
\begin{aligned}
\arg G(j\omega_c) = k\pi,\qquad k\in\mathbb{Z}.
\end{aligned}
\end{equation*}

If $\alpha$ is even, the claim follows directly, since there are $\alpha-1$ different
integer values for $k$ such that $k\pi$ is inside the interval $I_\alpha$.
This corresponds directly to having $\alpha-1$ or more critical frequencies.

In the other case, if $\alpha$ is odd, some additional reasoning is needed to prove the proposition.
In this case one has
\begin{equation*}
\begin{aligned}
I_\alpha = \left( -\frac{2m+1}{2} \pi, \frac{2m+1}{2} \pi\right), \qquad m\in\mathbb{Z}.
\end{aligned}
\end{equation*}
Thus the borders of the interval $I_\alpha$ are not at integer multiples of $\pi$,
which implies that in this case there are $\alpha$ different integer values for
$k$ such that $k\pi$ is in $I_\alpha$, corresponding to at least $\alpha$ critical frequencies.
% \end{proof}

\subsection*{Proof of Theorem~\ref{theo:topological-equivalence}}
The proof for Theorem~\ref{theo:topological-equivalence} uses the argument principle from
complex analysis, which is repeated here for completeness \citep{WhittakerWat1965}.

\paragraph*{\rmfamily\upshape Theorem (The argument principle):}
\begin{itshape}
% \begin{theorem*}
% \textbf{(The argument principle)}
Let $f$ be a meromorphic function on the domain $D\subset\mathbb{C}$ and
$\Gamma$ a simply closed curve in $D$ such that $f$ does not have a
zero or pole on $\Gamma$. The winding number $wn(f(\Gamma),0)$ of the image
of $\Gamma$ under $f$ around the origin is given by
\begin{equation*}
\begin{aligned}
wn(f(\Gamma),0) = z_f - p_f,
\end{aligned}
\end{equation*}
where $z_f$ ($p_f$) is the number of zeros (poles) of $f$ in the interior of
the curve $\Gamma$, counted according to their algebraic multiplicities.
\end{itshape}
% \end{theorem*}

Note that the winding number is counted in the counter-clockwise direction.

As typically done in linear control theory, we will generally use the imaginary axis
for $\Gamma$, also called the Nyquist curve.
This can be seen as a closed curve by first taking only the interval $[-jR,jR]$ and
the half circle with radius $R$ in the right half plane, and second letting $R\rightarrow\infty$.
Thus the interior of $\Gamma$ is the right half plane.

We will first proof the intermediate results given in 
Lemmas~\ref{lem:ordered-crit-freq}--\ref{lem:same-winding-number},
before proving Theorem~\ref{theo:topological-equivalence}.
\begin{proof}\emph{(Lemma~\ref{lem:ordered-crit-freq})}
If $\Omega_c(\xi)$ is minimal, then there is exactly one $\omega_c\in\Omega_c(\xi)$ such
that $\arg G(\xi,j\omega_c)=k\pi$ for each $k\in\mathbb{Z}$ with $k\pi\in I_\alpha$ (where
$I_\alpha$ is the interval defined in the proof of Proposition~\ref{prop:minimality}).
This implies that \[\left\vert \arg G(\xi,j\omega_c^i(\xi)) - \arg G(\xi,j\omega_c^{i-1}(\xi))\right\vert = \pi\]
and thus $G(\xi,j\omega_c^i(\xi)) G(\xi,j\omega_c^{i-1}(\xi)) < 0$.
\end{proof}

\begin{proof}\emph{(Lemma~\ref{lem:winding-number})}
Note that the loop breaking \eqref{eq:loop-breaking} assures that $G(\xi,j\infty)=G(\xi,-j\infty)=0$.
Considering also Lemma~\ref{lem:ordered-crit-freq}, it follows that every cut
of $G(\xi_i,\Gamma)$ to the right of the point $1$ is preceded and followed by a cut
of $G(\xi_i,\Gamma)$ with the negative real axis. 
Thus each cut to the right of the point $1$ corresponds to one winding of $G(\xi_1,\Gamma)$
around the point $1$.
Moreover, Lemma~\ref{lem:ordered-crit-freq} assures that these windings all
have the same direction and thus several windings cannot cancel in the total winding number.
\end{proof}

\begin{proof}\emph{(Lemma~\ref{lem:same-winding-number})}
From assumption (A1), the transfer functions $G(\xi_1,\cdot)$ and
$G(\xi_2,\cdot)$ have the same number of zeros and poles in the left and right half
plane.
Thus for the phase differences we have
\begin{equation*}
\begin{aligned}
\arg G(\xi_1,j\infty) - \arg G(\xi_1,-j\infty)=\arg G(\xi_2,j\infty) - \arg G(\xi_2,-j\infty).
\end{aligned}
\end{equation*}
This implies that the winding numbers $wn(G(\xi_1,\Gamma),1)$ and $wn(G(\xi_2,\Gamma),1)$
have the same sign and with Lemma~\ref{lem:winding-number} we conclude
\begin{equation*}
\begin{aligned}
\big\vert wn(G(\xi_1,\Gamma),1) - wn(G(\xi_2,\Gamma),1) \big\vert &= 
\big\vert \vert wn(G(\xi_1,\Gamma),1)\vert - \vert wn(G(\xi_2,\Gamma),1)\vert \big\vert \\ &= 
\big\vert \beta(\xi_1) - \beta(\xi_2) \big\vert.
\end{aligned}
\end{equation*}
\end{proof}

We are now ready to give the proof of Theorem~\ref{theo:topological-equivalence}.
\begin{proof}(\emph{Theorem~\ref{theo:topological-equivalence}})
By Definition~\ref{def:topological-equivalence}, topological equivalence of
$\xi_1$ and $\xi_2$ is equivalent to the condition that the matrices $A_{cl}(\xi_1)$
and $A_{cl}(\xi_2)$ have the same number of eigenvalues with positive real part.

From the proof of Lemma~\ref{lem:eigenvalues}, we know that
\begin{equation*}
\begin{aligned}
G(\xi,s)-1 = \frac{\det (sI-A_{cl}(\xi))}{\det (sI-A(\xi))}.
\end{aligned}
\end{equation*}
Using the argument principle, it follows that
\begin{equation*}
\begin{aligned}
wn(G(\xi,\Gamma),1) = n_{cl}(\xi) - n_{ol}(\xi),
\end{aligned}
\end{equation*}
where $n_{cl}(\xi)$ ($n_{ol}(\xi)$) is the number of eigenvalues of $A_{cl}(\xi)$
($A(\xi)$) with positive real part.
By assumption, $n_{ol}(\xi_1)=n_{ol}(\xi_2)$ and thus $\xi_1$ and $\xi_2$
are topologically equivalent if and only if
\begin{equation*}
\begin{aligned}
wn(G(\xi_1,\Gamma),1) = wn(G(\xi_2,\Gamma),1).
\end{aligned}
\end{equation*}
The claim of the theorem then follows from Lemma~\ref{lem:same-winding-number}.
\end{proof}

% biographical information
\section*{Biographical information}

\begin{minipage}[c]{0.2\textwidth}
\includegraphics[width=0.9\linewidth]{figs/sw.eps}
\end{minipage}
\begin{minipage}[c]{0.75\textwidth}
\textbf{Steffen Waldherr} is research assistant at the Institute
for Systems Theory and Automatic Control at the University of
Stuttgart, Germany.
His main research interests are uncertainty analysis of biochemical
reaction networks, non-linear dynamics, and modelling
of biochemical signal transduction pathways.
\end{minipage}\\[0.5cm]
\begin{minipage}[c]{0.2\textwidth}
\includegraphics[width=0.9\linewidth]{figs/fa.eps}
\end{minipage}
\begin{minipage}[c]{0.75\textwidth}
\textbf{Frank Allgöwer} is director of the
Institute for Systems Theory and Automatic Control at the 
University of Stuttgart, Germany. 
His main research interests are nonlinear, robust, and predictive
control, identification, and applications in process engineering, 
systems biology, mechatronics, and nanotechnology.
\end{minipage}

\end{document}